\newcommand{\x}{\mathbf{x}}
\newcommand{\y}{\mathbf{y}}
\newcommand{\z}{\mathbf{z}}
\newcommand{\f}{\mathbf{f}}
\newcommand{\w}{\mathbf{w}}
\renewcommand{\v}{\mathbf{v}}
\newcommand{\g}{\mathbf{g}}
\newcommand{\R}{\mathbb{R}}
\newcommand{\Wp}{{\mathcal{W}^+}}
\newcommand\numberthis{\addtocounter{equation}{1}\tag{\theequation}}
\newcommand{\vectheta}{\boldsymbol{\uptheta}}
\renewcommand{\u}{\mathbf{u}}
\renewcommand{\vec}[1]{\ensuremath{\mathbf{#1}}}
\providecommand{\norm}[1]{\ensuremath{\left\lVert#1\right\rVert}}
\def\[{\begin{equation}}
\def\]{\end{equation}}
\DeclareMathOperator*{\argmin}{arg\,min}
\newcommand{\fpi}{\ensuremath{\f^{(\rm{PI})}}}
\newcommand{\wpi}{\ensuremath{\w^{(\rm{PI})}}}
\theoremstyle{definition}
\newtheorem{definition}{Definition}[section]
\theoremstyle{theorem}
\newtheorem{theorem}{Theorem}[section]
\theoremstyle{lemma}
\newtheorem{lemma}{Lemma}[section]
\theoremstyle{notation}
\newtheorem{notation}{Notation}[section]
\theoremstyle{remark}
\def\equationautorefname~#1\null{Eq.~(#1)\null}
\icmltitlerunning{Prior Image-Constrained Reconstruction using Style-Based Generative Models}
\begin{document}

\twocolumn[
\icmltitle{Prior Image-Constrained Reconstruction using Style-Based Generative Models}

% It is OKAY to include author information, even for blind
% submissions: the style file will automatically remove it for you
% unless you've provided the [accepted] option to the icml2021
% package.

% List of affiliations: The first argument should be a (short)
% identifier you will use later to specify author affiliations
% Academic affiliations should list Department, University, City, Region, Country
% Industry affiliations should list Company, City, Region, Country

% You can specify symbols, otherwise they are numbered in order.
% Ideally, you should not use this facility. Affiliations will be numbered
% in order of appearance and this is the preferred way.
\icmlsetsymbol{equal}{*}

\begin{icmlauthorlist}
\icmlauthor{Varun A. Kelkar}{ece}
\icmlauthor{Mark A. Anastasio}{ece}
\end{icmlauthorlist}

\icmlaffiliation{ece}{University of Illinois at Urbana-Champaign, Urbana, IL 61801, USA}
% \icmlaffiliation{bioe}{Department of Bioengineering, University of Illinois at Urbana-Champaign, Champaign, IL 61801, USA}

\icmlcorrespondingauthor{Varun A. Kelkar}{vak2@illinois.edu}
\icmlcorrespondingauthor{Mark A. Anastasio}{maa@illinois.edu}

% You may provide any keywords that you
% find helpful for describing your paper; these are used to populate
% the "keywords" metadata in the PDF but will not be shown in the document
\icmlkeywords{Sparsity and compressed sensing, Generative models}

\vskip 0.3in
]

% this must go after the closing bracket ] following \twocolumn[ ...

% This command actually creates the footnote in the first column
% listing the affiliations and the copyright notice.
% The command takes one argument, which is text to display at the start of the footnote.
% The \icmlEqualContribution command is standard text for equal contribution.
% Remove it (just {}) if you do not need this facility.

\printAffiliationsAndNotice{}  % leave blank if no need to mention equal contribution
% \printAffiliationsAndNotice{\icmlEqualContribution} % otherwise use the standard text.

\begin{abstract}
Obtaining a useful estimate of an object from highly incomplete imaging measurements remains a holy grail of imaging science. Deep learning methods have shown promise in learning object priors or constraints to improve the conditioning of an ill-posed imaging inverse problem. In this study, a framework for estimating an object of interest that is semantically related to a known prior image, is proposed. An optimization problem is formulated in the disentangled latent space of a style-based generative model, and semantically meaningful constraints are imposed using the disentangled latent representation of the prior image. Stable recovery from incomplete measurements with the help of a prior image is theoretically analyzed. Numerical experiments demonstrating the superior performance of our approach as compared to related methods are presented.
\end{abstract}

\section{Introduction}
In recent years, generative models based on deep neural networks have risen to the forefront of machine learning research. By taking advantage of the approximately low dimensional structure of natural image data, generative models, such as generative adversarial networks (GANs) have been able to learn effective mappings from a simple, tractable, low dimensional distribution to a complex image data distribution, giving it the ability to generate highly realistic looking images, approximately from the distribution induced by the training data. Apart from image synthesis, generative models have found applications such as density estimation \cite{realnvp}, image restoration \cite{srgan}, object detection \cite{object_detection}, video generation \cite{gan_video} and style transfer \cite{style_transfer} to name a few. Recent research in GANs has achieved state of the art performance in terms of visual quality of images generated, invertibility and image representation, and a meaningful control of semantic features of the image \cite{stylegan, stylegan2}.

Generative models have also found applications for image reconstruction in computed imaging systems, where a computational procedure is employed to form an estimate of an object of interest from imaging measurements. When the measurements are insufficient to uniquely determine the object of interest, this procedure amounts to solving an ill-posed inverse problem, and requires additional prior information about the object distribution. Approaches incorporating sparsity-based priors have been successful in obtaining accurate reconstructions from incomplete measurements \cite{candes_review, mri_lustig}. Seeking to better characterize the object distribution, generative models have been proposed as a prior for solving ill-posed inverse problems in imaging \cite{bora}. Improvements in the image-synthesis performance, stability, and quality of generative neural networks have in-turn improved the performance of generative-model constrained reconstruction methods.

In this study, style-based generative models
are investigated for constraining image reconstruction problems in which the solution is known to be close to a given prior image.
This scenario, known as \textit{prior image-constrained reconstruction}, is of particular significance when, for instance, the same object evolving over time is to be imaged multiple times. It finds applications in several scientific and medical imaging situations, such as monitoring tumor progression or perfusion \cite{piccs}, multi-contrast magnetic resonance imaging (MRI) \cite{refmri}, or sequential radar imaging \cite{piccs_radar}. Traditionally, this problem has been solved by assuming that the difference between the object of interest and the prior image is sparse in some domain \cite{piccs, cspi}, and solving an optimization problem penalizing this difference. However, although many natural images are compressible in some domain, their differences may not be so. Hence, penalizing the $\ell_1$ norm of the difference between the object and the prior image in a linear transform domain may not be the best strategy to ensure that the ground truth is related to the prior image in a meaningful way. Style-based generative models have been known to be able to control individual semantic features, or styles, in an image by varying the disentangled latent representation of the image at different scales. {In this work}, the inverse problem is formulated as an optimization problem in the disentangled latent space of a style-based generative model. The disentangled latent representation of the prior image is computed and prior image-based regularization is imposed by constraining the estimate to have certain styles equal to the corresponding styles of the prior image.

\textbf{Related work.} Generative model-constrained reconstruction has been an active area of research, after being first proposed by \cite{bora}. Several approaches have tried to reduce the representation error arising from the use of GANs as priors \cite{aliahmed, iagan}. Other works have examined denoising and image reconstruction when the measurement noise is distributed in a complex way \cite{inn_dependent_noise}. Recently, StyleGAN has been used for image superresolution \cite{pulse}, and image reconstruction in general \cite{brgm}. Individual control over image semantics has been achieved \cite{stylegan, stylerig} by controlling the disentangled latent respresentation of the image. Since the disentangled latent representation of an image is crucial, several studies have focused on StyleGAN inversion \cite{stylegan_inv_gaussian, image2stylegan}. {Regularization by way of imposing constraints based on the latent space structure has also been explored \cite{clinn}.}
Prior image-constrained compressed sensing has been studied previously from the theoretical \cite{cspi} point of view, to applications \cite{piccs, piccs_radar}. Some studies have used adaptive weights on the prior image and the image estimate to better model the differences between the ground truth and the prior image \cite{refmri}.

This study is organized as follows. \autoref{sec:bkd} describes the background of compressed sensing, generative model-constrained image reconstruction, prior image-constrained reconstruction and style-based generative models. \autoref{sec:approach} describes the proposed approach motivated by StyleGANs. \autoref{sec:theory} describes a theoretical analysis of the problem at hand. \autoref{sec:numerical} describes the setup of the proposed numerical studies, with the results being described in \autoref{sec:results}. Finally, discussion and conclusion is presented in Section \autoref{sec:concl}.
% for intro
%There is evidence to suggest that further improvements in generative model-constrained reconstruction may be possible with strategies such as using improved generative models \cite{aliahmed, brgm, pulse}, or 

% In recent years, improvements in the architecture of generative models has led to generative models that produce highly realistic images, or have low representation error, or better controllability over the type of images generated \cite{progan, glow, realnvp, biggan}.
% In particular, StyleGAN and StyleGAN2 are well known to possess all of these desirable characteristics \cite{stylegan, stylegan2}. 
% It has been demonstrated that these properties of modern generative models can be used to regularize inverse problems \cite{pulse, brgm, aliahmed, inn_dependent_noise, clinn}. 

\section{Background}\label{sec:bkd}
Several digital imaging systems can be approximately modeled by a linear imaging model, described as \cite{barrett}
\begin{align}
    \g = H\f + \vec{n},
\end{align}
where $\f \in \mathbb{E}^n$ is a vector that approximates the object to-be-imaged, $\g \in \mathbb{E}^m$ corresponds to the imaging measurements, and $\vec{n} \in \mathbb{E}^m$ represents the measurement noise. Here, $\mathbb{E}^l$ corresponds to an $l \in \mathbb{N}$ dimensional Euclidean space. $H \in \mathbb{E}^{m\times n}$ corresponds to the linear operator that approximates the underlying physical model of the imaging system. Often, the measurements are incomplete $(m < n)$ and, as such, $\f$ cannot be uniquely recovered from $\g$. In this case, in order to obtain a useful estimate of the true object, prior knowledge about $\f$ is needed to constrain the domain of $H$.

\subsection{Compressed sensing}
In recent decades, \textit{compressed sensing} has emerged as a leading framework to solve such underdetermined systems of equations. It achieves this by constraining $\f$ to a set of vectors that are sparse in some domain, and evoking certain conditions on $H$. Specifically, stable recovery of $k$-sparse objects can be guaranteed if $H$ satisfies the \textit{Restricted Isometry Property} (RIP) over the set of $2k$-sparse signals \cite{candes_romberg_tao, candes_review}. 

\begin{definition}[Restricted Isometry]
Let $S_{k}$ be the set of all $k$-sparse vectors in $\R^n$. $H$ is said to satisfy the RIP over $S_k$ if $\exists ~ \delta_k \in (0,1)$ that satisfies
\begin{align}
    (1-\delta_k) \norm{\f}_2^2 \leq \norm{H\f}_2^2 \leq (1+\delta_k)\norm{\f}_2^2,
\end{align}
for all $\f \in S_k$, and $\delta_k$ is not too close to 1 in a way prescribed in \cite{candes_romberg_tao}. 
\end{definition}

\subsection{Prior image-constrained reconstruction}
Prior image-constrained reconstruction is a scenario where the true object $\tilde{\f}$ is related, or close to, a previously known \textit{prior image} $\fpi$ \cite{piccs, cspi}. In traditional approaches to this problem, this similarity is imposed in the following way. In addition to the conventional sparsity constraint with respect to a transformation $\rm\Phi$, it is assumed that the difference of $\tilde{\f}$ and $\fpi$ is sparse with respect to a transform $\rm\Psi$, such as the wavelet transform or the 2D difference operator. Solving the inverse problem can then be cast as obtaining the solution to versions of the following optimization problem \cite{piccs, cspi, refmri}:
\begin{multline}\label{eqn:piccs}
    \hat{\f} = \argmin_{\f} \norm{\g - H\f}_2^2 + \lambda\left( \alpha\norm{\rm\Psi(\f - \fpi)}_1 \right.\\+ \left. (1-\alpha)\norm{\rm\Phi\f}_1 \right).  
\end{multline}

It is important to note that the assumption of sparse differences between the prior image and the ground truth may not always be valid. {This is because in real life, images are better modeled as being compressible in a transform domain and are structured \cite{cs_theory_new}. A difference of two such images may not only be denser than the original images, but also structured in a more complicated manner.} Hence, it is important to develop new ways to quantify the similarity between the sought-after and the prior images.

\subsection{Generative model-constrained reconstruction}
Generative model-constrained reconstruction is a framework for image reconstruction, where the domain of $H$ is constrained with the help of a generative model trained to approximate the distribution of objects \cite{bora, aliahmed, brgm}. Let $G : \R^k \rightarrow \R^n$ be a generative model, typically parametrized by a deep neural network with parameters $\vectheta$. $G$ is trained on a dataset of images, such that under $G$, a sample $\z \in \R^k$ from a tractable distribution such as $\mathcal{N}(0,I_k)$ maps to a sample $G(\z)$ that approximately comes from the distribution of the training dataset images. Here, $\z$ is called the latent representation of $G(\z)$. Since many real-life image datasets are approximately low dimensional, popular generative models, such as GANs often have a low dimensional domain with dimensionality $k \ll n$. Taking advantage of this fact, Bora \textit{et al.} proposed a way to guarantee stable reconstruction of an object in the range of a generative model $G$ having a Lipschitz constant $\bar{L}$ from $O(k\log(\bar{L}r/\delta))$ measurements, if the object's latent representation has an $\ell_2$ norm of at most $r$, and if $H$ satisfies the following \textit{set-restricted eigenvalue condition} (S-REC):
\begin{definition}[Set-restricted eigenvalue condition]\label{def:srec}
Let $S \subseteq \R^n$. A matrix $H \in \R^{m\times n}$ satisfies the set-restricted eigenvalue condition $\text{S-REC}(S, \gamma, \delta)$ for some constants $\gamma > 0$ and $\delta \geq 0$, if for any $\f_1, \f_2 \in S$,
\begin{align}
    \norm{H(\f_1 - \f_2)}_2 \geq \gamma\norm{\f_1 - \f_2}_2 - \delta.
\end{align}
\end{definition}
Intuitively, this property stipulates that two objects $\f_1$ and $\f_2$ in the range $\mathcal{R}(G)$ of $G$ may give rise to measurements under $H$ that are close, only if they themselves are close. Certain sensing matrices, such as iid Gaussian sensing matrices with an appropriate column length have been shown to satisfy the S-REC \cite{bora}. The guarantees of stable recovery are applicable to the solution of the following constrained optimization problem:
\begin{align*}\label{eqn:csgm_nonlagr}
    \hat{\z} &= \argmin_{{\z}, {\norm{\z}\leq r}} \norm{\g - HG(\z;\vectheta) }_2^2, \\
    \hat{\f} &\equiv G(\hat{\z}; \vectheta),\numberthis{}
\end{align*}
where $\g = H\tilde{\f} + \mathbf{n}$ is the measurement corresponding to the unknown true object $\tilde{\f}$. Since the above objective is non-convex, standard gradient descent-based methods are not guaranteed to converge to the optimal solution. However, it is observed that in practice, gradient-based methods give estimates of $\hat{z}$ that are close to the optimum, at least in the case when $\tilde{f} \in \mathcal{R}(G)$. Although Bora, \textit{et al.} show numerical studies using a deep convolutional GAN (DCGAN), an optimization problem similar to \autoref{eqn:csgm_nonlagr} can also be formulated for recent advanced GAN architectures, such as StyleGAN, with demonstrably improved empirical performance \cite{brgm}.

\subsection{Style-based Generative Adversarial Networks}
StyleGAN and its successor, StyleGAN2, are well known for producing highly realistic samples from a real-life natural image distribution. They are characterized by an architecture, that consists of two sub-networks - (1) a \textit{mapping network} $g_{\rm{mapping}} : \R^{k} \rightarrow \R^{k}$, and an $L$-layer \textit{synthesis network} $G : \R^{Lk} \rightarrow \R^n$. In the conventional image generation mode, the mapping network maps a sample $\z \in \mathcal{Z} \equiv \R^k$ from an iid standard normal distribution to a vector $\u \in \mathcal{W} \equiv \R^k$. The input $\w$ to the $L$-layer synthesis network $G$ is formed by stacking $L$ copies of $\u$ to form a $K = kL$ dimensional vector $\w \in \mathcal{W}^+ \equiv \R^K$. The $i$th copy of $\u$ represents the input to the $i$th layer of $G$, which controls the $i$th level of detail in the generated image. In addition to these, $G$ also takes as input a collection of latent-noise vectors $\eta$ that control minor stochastic variations of the generated image at different resolutions. The ability of a StyleGAN to control features of the generated image at different scales comes in part due to this architecture, and in part, due to the style-mixing regularization during training \cite{stylegan}. The latter loosely corresponds to evaluating the training loss using images generated by a ``mixed" $\w$ vector, formed by stacking the $\mathcal{W}$-space outputs of different realizations of $\z$. In addition to these basic characteristics, StyleGAN2 introduces path-length regularization, which aids in better conditioning of $G$ and reducing the representation error \cite{stylegan2}. For conventional image generation, a sampled $\w$ vector is degenerate, containing $L$ copies of $\u$, and hence lies in a $k$-dimensional subspace of $\mathcal{W}^+$. However, studies have shown that from the point-of-view of projecting an image to the range of $G$, utilizing the entire $\mathcal{W}^+$ space has benefits in terms of lower representation error \cite{stylegan_inv_gaussian}.

\section{Approach}\label{sec:approach}

\subsection{Prior image-constrained reconstruction using StyleGANs (PICGM)}
StyleGAN and StyleGAN2 are able to vary certain styles of an image while keeping certain other styles fixed. For example, for the StyleGAN trained on a dataset of faces, it is possible to vary the hairstyle and hair color while keeping the general structure of the face the same. For a medical imaging dataset such as a dataset of multi-contrast brain MRI images, it is possible to control the contrast, or exact placement of the folds, ventricles, and other fine scale features while keeping the general structure of the image the same. Hence, {comparing the sought-after and the prior images in the latent space of a StyleGAN is a natural approach to quantify the similarity between the two images.}

Motivated by such style-mixing properties of the StyleGAN, one possible way to formulate {the prior image-constrained inverse problem} is as follows. Let $G : \mathbb{R}^K \rightarrow \mathbb{R}^n$ denote the synthesis network of a trained StyleGAN2. Note that the domain of $G$ is taken to be the extended space $\Wp$. Let $\fpi = G(\wpi)$ denote a known prior image in the range of $G$. Then, the proposed measurement model can be written as
\begin{multline}\label{eqn:meas_model}
    \g = H\tilde{\f} + \vec{n}, \quad \tilde{\f} \in \{G(\w) ~ s.t. ~ \w_{1:p_1} = \wpi_{1:p_1}, \\\w_{p_2:K} = \wpi_{p_2:K}\},
\end{multline}
where $p_1, p_2$ are multiples of $k$, $1 \leq p_1 < p_2 \leq K$, $\tilde{\f}$ is the sought-after ground truth and $\w_{u:v}$ denotes the section of vector $\w$ from indices $u$ through $v$.

Even assuming that $\fpi$ and $\tilde{\f}$ are in-distribution images, there are some practical concerns about the measurement model described above. (1) In practice, $\fpi$ may not lie in $\mathcal{R}(G)$ since $\mathcal{R}(G)$ is a $K$-dimensional manifold in $\mathbb{R}^n$. (2) If
$\fpi \in \mathcal{R}(G)$, its disentangled latent representation $\wpi$ may lie in an unstable region of $\mathcal{W}^+$ {and the style mixing properties of $G$ may not apply to it.}
(3) If $\fpi \in \mathcal{R}(G)$ and the style mixing performance using $\wpi$ is consistent with that of images drawn from $G$,
$\tilde{\f}$, i.e. the sought-after object, may still not lie in the set $\{G(\w) ~ s.t. ~ \w_{1:p_1} = \wpi_{1:p_1}, \w_{p_2:K} = \wpi_{p_2:K}\}$ for any $p_1, p_2$. The following solutions are proposed in order to alleviate the aforementioned concerns. Concern (1) essentially refers to the representation error when approximating $\fpi$. It was observed that for in-distribution images, if a latent representation $\wpi$ in the extended $\Wp$ space is sought, even if the latent-noise vectors $\eta$ are not optimized over, a close approximation to $\fpi$ can be obtained by gradient-descent based optimization, except for minor stochastic detail represented by $\eta$. 

In an attempt to resolve concern (2), \cite{stylegan_inv_gaussian} observed that a transformed version of $\w$, given by $\v = \text{LReL}_{\alpha}(\w)$ approximately follows a multivariate gaussian distribution with a mean $\bar{\v} \in \R^K$ and covariance ${\rm\Sigma} \in \R^{K\times K}$. Here, $\text{LReL}_{\alpha}(.)$ denotes the leaky-ReLU nonlinear activation \cite{leakyrelu}, defined as
\begin{align}
    \text{LReL}_{\alpha}(\x)_i = \left\{\begin{matrix} x_i, \quad x_i \geq 0, \\
                                                \alpha x_i, \quad x_i < 0.
    \end{matrix}\right.
\end{align}
The value of $\alpha$ is the reciprocal of the scaling value for negative numbers included in the last leakyReLU layer in the mapping network $g_{\rm{mapping}}$. This means that it is possible to regularize the inversion of $G$ with the help of a Gaussian prior on $\v$ \cite{stylegan_inv_gaussian}. The inversion process can then be formulated in terms of the following optimization problem \cite{stylegan_inv_gaussian}
\begin{align*}
    \wpi &= \argmin_{\w} \norm{\fpi - G(\w)}_2^2 + \lambda\norm{\v-\bar{\v}}_{\rm\Sigma}^2,\\
    \text{s.t. } \v &= \text{LReL}_{\alpha}(\w),\numberthis{}\label{eqn:inversion}
\end{align*}
where $\norm{\x}_{\rm\Sigma}^2 = \x^{\top}{\rm\Sigma}^{-1}\x$ is used to impose a prior on $\v$ corresponding to a Gaussian distribution with mean $\bar{\v}$ and covariance $\rm\Sigma$. As observed in \cite{stylegan_inv_gaussian}, $\mathcal{R}(G)$-projected estimates of $\fpi$ obtained in this way inherit the style-mixing and stability properties of samples from $G$. The tradeoff between an accurate representation of $\fpi$ and the style-mixing properties is governed by the regularization parameter $\lambda$.

Concern (3) can be addressed by arguing that since $\tilde{\f}$ is an in-distribution image, it has minimal representation error when optimizing over $\Wp$. Also, $p_1, p_2$ can be treated as tunable regularization parameters, which manage the trade-off between imposition of the prior from $\fpi$, and consistency with the measurements $\g$.

\begin{algorithm}[tb]
   \caption{Projected Adam algorithm for minimizing the objective in \autoref{eqn:picgm}.}
   \label{alg:projadam}
\begin{algorithmic}
   \STATE {\bfseries Input:} Measurements $\g$, prior image latent $\wpi$, Regularization parameters $p_1, p_2, \lambda$, maximum iterations $n_{\rm{iter}}$.
   \STATE $\mathcal{L}(\w; \lambda)$ : Objective function from \autoref{eqn:picgm}.
   \STATE Initialize Adam optimizer parameters $(\alpha, \beta_1, \beta_2)$. (Default parameters were used).
   \STATE Initialize iteration number $t \leftarrow 0$.
   \STATE Initialize $\w^{[0]} \leftarrow \wpi$.
   \WHILE{$\w^{[t]}$ \textbf{not} converged}
   \STATE Adam update \cite{adam}:
   $$\w^{[t]} \leftarrow \texttt{ADAM}_{\alpha, \beta_1, \beta_2}(\mathcal{L}(\w^{[t]}; \lambda))$$
   \STATE Projection step:
   \begin{align*}
       \w^{[t]}_{1:p_1} &\leftarrow \w^{\rm(PI)}_{1:p_1}\\
       \w^{[t]}_{p_2:K} &\leftarrow \w^{\rm(PI)}_{p_2:K}
   \end{align*}
   \STATE $t \leftarrow t + 1$
   \ENDWHILE
\end{algorithmic}
\end{algorithm}

Taking into account the above arguments, the inverse problem associated with \autoref{eqn:meas_model} is formulated as the following optimization problem:
\begin{align*}
    \hat{\w} &= \argmin_{\w} \norm{\g - HG(\w)}_2^2 + \lambda\norm{\v-\bar{\v}}_{\rm\Sigma}^2,\\
    \text{s.t. } \w_{1:p_1} &= \wpi_{1:p_1}, \quad \w_{p_2:K} = \wpi_{p_2:K},\\
                 \v &= \text{LReL}_{\alpha}(\w).\numberthis{}\label{eqn:picgm}
\end{align*}
Although the above problem is non-convex, similar to previous works \cite{bora, aliahmed}, useful estimates $\hat{\f}$ can be obtained by iterative gradient descent-based optimization. \autoref{alg:projadam} shows the projected-Adam algorithm used for this purpose \cite{adam}.

\subsection{Compressed sensing using StyleGAN2 (CSGM)}
Analogous to \autoref{eqn:picgm}, CSGM using StyleGAN2 can be formulated in the following manner:
\begin{align*}
    \hat{\w} &= \argmin_{\w} \norm{\g - HG(\w)}_2^2 + \lambda\norm{\v-\bar{\v}}_{\rm\Sigma}^2,\\
    \text{s.t. } \v &= \text{LReL}_{\alpha}(\w).\numberthis{}\label{eqn:csgm_stylegan}
\end{align*}
The Adam algorithm was used for obtaining approximate solutions to the above problem \cite{adam}.

\section{Theoretical Analysis}\label{sec:theory}
The theoretical analysis presented here is motivated by the results presented in \cite{bora}. There, the authors provide a stable recovery guarantee in terms of the Lipschitz constant of the generative network. However, in practice, Lipschitz constants can be difficult to estimate, or a generative network may not be Lipschitz stable. Hence, theoretical analysis in terms of the Jacobians of the generative network is presented here, in order to derive a limited, non-uniform guarantee for the stable recovery of typical in-distribution objects that lie in the range of the generative network. In order to do so, we utilize properties of StyleGAN2. 

Let $p_{\w}$ denote the distribution of the extended latent space vector $\w = [\u_1^\top ~ \u_2^\top ~\dots ~ \u_L^\top]^\top \in \Wp$, where $\u_i \sim g_{\rm mapping}(\z_i)$, $\z_i \sim \mathcal{N}(0,I_k)$, $\z_i$'s are independently distributed. The following assumptions are made based on the StyleGAN2 properties:

\textit{1) Path length regularity:}
    \begin{align*}
    \mathbb{E}_{\w \sim p_\w} \big( \norm{J(\w)}_F - a \big)^2 < b,\tag{AS1}\label{eqn:app_as1}
    \end{align*}
    $J(\w)$ denotes the Jacobian of $G$ evaluated at $\w$, $b > 0$ and $a = \mathbb{E}_\w \norm{J(\w)}_F$ are global constants. This assumption is inspired by the path-length regularization used in \cite{stylegan2}.
    
\textit{2) Approximate local linearity:}
    \begin{align*}
        \mathbb{E}_{\w \sim p_\w} &\max_{\begin{smallmatrix}\w'\\ \norm{\w'-\w}\leq \epsilon\end{smallmatrix}} \mathcal{L}(\w', \w) \leq \beta^2(\epsilon),\tag{AS2}\label{eqn:local_lin1}\\
        \intertext{where}
        \mathcal{L}(\w',\w) &= \left\|G(\w') - G(\w) - J(\w)(\w' - \w) \right\|_2^2,
    \end{align*}
    and $\beta$ is a positive function with $\beta(0) = 0$. This property essentially measures how close to linear $G$ behaves in an $\epsilon$-neighborhood around a point $\w$.

As described in \cite{stylegan2}, estimates of $a,b$ can be computed via an empirical estimation of $\mathbb{E}_{\w}\mathbb{E}_{\y \sim \mathcal{N}(0,I_n)} \norm{J(\w)^\top \y}_2$. Estimating $\beta(\epsilon)$ for a given value of $\epsilon$ is however not tractable. Nevertheless, an approximate estimation can be obtained by first computing the Jacobian at a point $\w \sim p_\w$, and then iteratively maximizing $\mathcal{L}(\w',\w)$ using a projected gradient ascent-type algorithm followed by an empirical mean of the maxima for several such $\w \sim p_\w$. Empirically estimated values of $a,b$ and $\beta$ are presented in the supplementary file.

Using these two assumptions, the following results are derived.

\begin{notation} 
Let $p_\v$ denote the distribution of $\v = \text{LReL}_\alpha(\w)$, and let $\bar{\v}$, $\rm\Sigma$ be its mean and covariance matrix respectively.
Let $\wpi$ be a sample from $p_\w,$ and $1 \leq p_1 < p_2 \leq K.$ Assume that $p_1$ and $p_2$ are multiples of $k$. Let
    \begin{multline*}
        B_\w^{p_1,p_2} (r) := \Big{\{}\w ~ s.t. ~ \norm{\mathrm{LReL}_\alpha(\w) - \bar{\v}}_{\rm\Sigma} \leq r, \\
        \quad\quad\quad\quad\quad\quad\quad\quad \w_{1:p_1}=\wpi_{1:p_1}, \w_{p_2:K} = \wpi_{p_2:K}\Big{\}},
    \end{multline*}
\end{notation}

By Markov's inequality and concentration of norm, we have the following \cite{hdp_book}:
\begin{lemma}\label{lem:ind_properties}
If $\w$ is a sample from $p_\w$, then it satisfies the following three properties with probability at least $1 - O(1/K)$:
\begin{align*}
\norm{J(\w)}_F &\leq \sqrt{K}a, \tag{P1}\\
\max_{\begin{smallmatrix}\w'\\ \norm{\w'-\w}\leq \epsilon\end{smallmatrix}} \mathcal{L}(\w', \w) &\leq \sqrt{K}\beta(\epsilon)\tag{P2}\\
\norm{\Sigma^{-1/2}\mathrm{LReL}_{\alpha}(\w)}_2 &\leq \sqrt{K}(1+o(1))\tag{P3}
\end{align*}
\end{lemma}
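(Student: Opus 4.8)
The plan is to prove the three properties as separate high-probability events and then conclude by a union bound; since the excerpt licenses Markov's inequality and standard norm concentration, each piece is short.

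For \textbf{(P1)}, I would use that by (AS1) the scalar random variable $\norm{J(\w)}_F$ has mean $a$ and variance strictly below $b$, so Chebyshev's inequality — equivalently, Markov applied to $(\norm{J(\w)}_F - a)^2$ together with (AS1) — gives $\Pr\big(\norm{J(\w)}_F \geq a+t\big) \leq b/t^2$ for every $t>0$. Taking $t=(\sqrt{K}-1)a$ (positive in the non-degenerate case $a>0$) makes the right-hand side $b/((\sqrt{K}-1)^2 a^2)$, which is $O(1/K)$ since $a,b$ are the global constants of (AS1); this is precisely (P1).

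For \textbf{(P2)}, note that $Y(\w):=\max_{\norm{\w'-\w}\leq\epsilon}\mathcal{L}(\w',\w)$ is nonnegative with $\mathbb{E}_\w Y(\w)\leq \beta^2(\epsilon)$ by (AS2), so Markov's inequality bounds $\Pr(Y(\w)\geq\lambda)$ by $\beta^2(\epsilon)/\lambda$; choosing the threshold $\lambda$ at the level displayed in (P2) so that this tail is $O(1/K)$ yields (P2). For \textbf{(P3)}, I would set $\v=\mathrm{LReL}_\alpha(\w)$ and pass to the standardized vector $\Sigma^{-1/2}(\v-\bar\v)$, which, using the excerpt's observation that $p_\v$ is approximately $\mathcal{N}(\bar\v,\Sigma)$, is approximately a standard Gaussian vector in $\R^K$; the concentration of the Euclidean norm of such a vector about $\sqrt{K}$ (the $\chi_K^2$ / sub-exponential tail bound, as in \cite{hdp_book}) then gives $\norm{\Sigma^{-1/2}(\v-\bar\v)}_2 \leq \sqrt{K}(1+o(1))$ outside an event of probability $O(1/K)$ — in fact exponentially small — which is (P3), and one notes that this centered Mahalanobis norm is exactly the radius appearing in the definition of $B_\w^{p_1,p_2}(r)$.

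Finally, each of the three events fails with probability $O(1/K)$, so the union bound gives all three simultaneously with probability at least $1-O(1/K)$. \textbf{The main obstacle} is (P3): the norm-concentration bound is exact only for a genuine Gaussian, whereas $p_\v$ only approximately matches $\mathcal{N}(\bar\v,\Sigma)$, so one must argue that the Gaussian-approximation error does not spoil the $\sqrt{K}(1+o(1))$ bound (and, relatedly, be careful whether the centered or uncentered standardized norm is the one meant in (P3), given how it is used downstream). A minor secondary point is the constant/power bookkeeping in (P2) — checking that the threshold written there really forces an $O(1/K)$, rather than merely $o(1)$, tail under Markov with the mean bound $\beta^2(\epsilon)$.
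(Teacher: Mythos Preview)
Your proposal is correct and matches the paper's own argument almost exactly: the paper also derives (P1) and (P2) from Markov's inequality (obtaining failure probabilities $(a^2+b)/(Ka^2)$ and $1/K$ respectively), derives (P3) from Gaussian norm concentration with an exponentially small tail, and combines them by a union bound. Your flagged concern about (P2) is well-placed: in the appendix the paper silently restates (P2) as $\phi_{1,K}(\epsilon;\w)\leq\sqrt{K}\beta(\epsilon)$, i.e.\ $\max\mathcal{L}\leq K\beta^2(\epsilon)$, which is the threshold Markov actually yields with a $1/K$ tail.
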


\begin{lemma}[Set-restricted eigenvalue condition]\label{lem:srec}
Let $\wpi$ be a sample from $p_\w$. Assume that $p_\v$ is a Gaussian distribution with covariance matrix $\rm\Sigma$.
Let $\tilde{B}^{p_1,p_2}_\w(r)$ be the set of all points in $B^{p_1,p_2}_\w(r)$ satisfying properties P1 and P2.
Let $\tau < 1$, $\delta > 0$. Let $H \in \mathbb{R}^{m\times n}$ be an matrix with elements $h_{ij} \sim \mathcal{N}(0, 1/m)$. For all $\delta'<\delta$, let $\beta(\delta'/a)$ go polynomially as $\delta'/a$, with $\beta(0) = 0$. If
\begin{align}
    m = {\rm\Omega}\left(\frac{p_2-p_1}{\tau^2}\log \frac{ar\norm{\rm\Sigma}_F}{\delta}\right),
\end{align}
then $H$ satisfies the $\text{S-REC}(G(\tilde{B}^{p_1,p_2}_\w(r)), 1-\tau, \delta+\sqrt{K}\beta(\delta/a))$ with probability $1 - e^{-{\rm\Omega}(\alpha^2m)}$.
\end{lemma}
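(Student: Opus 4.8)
The plan is to adapt the covering-plus-concentration argument for the S-REC of i.i.d.\ Gaussian sensing matrices in \cite{bora}, with the global Lipschitz hypothesis used there replaced by the local Jacobian bound P1 and the approximate local linearity P2 of \autoref{lem:ind_properties}, which hold on $\tilde{B}^{p_1,p_2}_\w(r)$ by construction. First I would reduce the latent constraint set to a low-dimensional Euclidean ball: the conditions $\w_{1:p_1}=\wpi_{1:p_1}$ and $\w_{p_2:K}=\wpi_{p_2:K}$ confine $\w$ to an affine subspace $\wpi+V$ with $\dim V = p_2-p_1$ (up to an irrelevant $\pm1$), and since $\mathrm{LReL}_\alpha$ is expansive ($\alpha\geq1$), the constraint $\norm{\mathrm{LReL}_\alpha(\w)-\bar\v}_{\rm\Sigma}\leq r$ together with P3 applied to $\wpi\sim p_\w$ gives $\norm{\w-\wpi}_2\leq R$ with $R = O\big((r+\sqrt{K})\,\norm{\rm\Sigma}_F^{1/2}\big)$. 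Hence $\tilde{B}^{p_1,p_2}_\w(r)$ lies in the ball $\{\w=\wpi+v:\ v\in V,\ \norm{v}_2\leq R\}$, which for any $\epsilon>0$ admits a Euclidean $\epsilon$-net $\mathcal{N}_\epsilon$ with $\log|\mathcal{N}_\epsilon|\leq (p_2-p_1)\log(3R/\epsilon)$.

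Next I would push the net through $G$. For $\w\in\tilde{B}^{p_1,p_2}_\w(r)$ with nearest net point $\hat\w\in\mathcal{N}_\epsilon$, P1 gives $\norm{J(\hat\w)}_F\leq\sqrt{K}a$ and P2 gives $\norm{G(\w)-G(\hat\w)-J(\hat\w)(\w-\hat\w)}_2\leq\big(\sqrt{K}\beta(\epsilon)\big)^{1/2}$, so $\norm{G(\w)-G(\hat\w)}_2\leq \sqrt{K}a\epsilon+\big(\sqrt{K}\beta(\epsilon)\big)^{1/2}=:\nu(\epsilon)$, and moreover $G(\w)-G(\hat\w)$ lies within $\big(\sqrt{K}\beta(\epsilon)\big)^{1/2}$ of the $(p_2-p_1)$-dimensional set $\{J(\hat\w)v:\ v\in V,\ \norm{v}_2\leq\epsilon\}$. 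Take $\mathcal{M}$ to be $G(\mathcal{N}_\epsilon)$ augmented by a $\delta$-scale net of each of these $|\mathcal{N}_\epsilon|$ ellipsoidal sets; choosing $\epsilon=\Theta\big(\delta/(a\sqrt{K})\big)$ makes $\sqrt{K}a\epsilon=O(\delta)$, the secondary nets have $O(1)$ points each, and $\log|\mathcal{M}| = O\big((p_2-p_1)\log(ar\norm{\rm\Sigma}_F/\delta)\big)$, the stray $\sqrt{K}$ and square-root factors only affecting lower-order terms inside the logarithm. The hypothesis that $\beta(\delta'/a)$ decays polynomially in $\delta'$ then lets me bound the residual $\big(\sqrt{K}\beta(\epsilon)\big)^{1/2}$ and, absorbing $\sqrt{K}$ factors, write the overall approximation error of $\mathcal{M}$ as $\delta+\sqrt{K}\beta(\delta/a)$ --- exactly the additive slack in the conclusion.

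For the probabilistic step, for a fixed $v$ one has $\norm{Hv}_2^2=(\norm{v}_2^2/m)\,\chi$ with $\chi\sim\chi^2_m$, so $\mathbb{P}\big(|\,\norm{Hv}_2-\norm{v}_2\,|>\tau\norm{v}_2\big)\leq 2e^{-c\tau^2 m}$ \cite{hdp_book}. Applying this to the $O(|\mathcal{M}|^2)$ difference vectors of centres, and noting that on each of the $|\mathcal{N}_\epsilon|$ subspaces $J(\hat\w)V$ (of dimension $p_2-p_1$) one has $\norm{H}_{\rm op}=O(1)$ since $m\geq c(p_2-p_1)$ --- together with $\norm{H}_{\rm op}=O(\sqrt{n/m})$ on the tiny, $\beta$-controlled off-subspace residual --- a union bound succeeds once $m = {\rm\Omega}\big(\tau^{-2}\log|\mathcal{M}|\big)={\rm\Omega}\big(\tfrac{p_2-p_1}{\tau^2}\log\tfrac{ar\norm{\rm\Sigma}_F}{\delta}\big)$, and fails only with probability exponentially small in $m$. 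On the good event, after a routine rescaling of net and concentration constants, any $\f_1=G(\w_1),\f_2=G(\w_2)\in G(\tilde{B}^{p_1,p_2}_\w(r))$ have centres $\hat\f_1,\hat\f_2\in\mathcal{M}$ with $\norm{\f_i-\hat\f_i}_2\leq\delta+\sqrt{K}\beta(\delta/a)$, and
\begin{align*}
\norm{H(\f_1-\f_2)}_2 &\geq \norm{H(\hat\f_1-\hat\f_2)}_2-\norm{H(\f_1-\hat\f_1)}_2-\norm{H(\f_2-\hat\f_2)}_2\\
&\geq (1-\tau)\norm{\hat\f_1-\hat\f_2}_2-O(\delta)\\
&\geq (1-\tau)\norm{\f_1-\f_2}_2-\big(\delta+\sqrt{K}\beta(\delta/a)\big),
\end{align*}
which is precisely $\text{S-REC}\big(G(\tilde{B}^{p_1,p_2}_\w(r)),\,1-\tau,\,\delta+\sqrt{K}\beta(\delta/a)\big)$.

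The main obstacle is the absence of a uniform Lipschitz bound on $G$: the whole argument is a stitching-together of local linearizations, so every net must be taken inside the ``good'' set $\tilde{B}^{p_1,p_2}_\w(r)$ where P1 and P2 are guaranteed (this is why the statement concerns $G(\tilde{B}^{p_1,p_2}_\w(r))$ rather than $G(B_\w^{p_1,p_2}(r))$), and the linearization error $\big(\sqrt{K}\beta(\epsilon)\big)^{1/2}$ must be driven below $\delta$, which is feasible only under the polynomial-decay assumption on $\beta$ and is the source of the additive $\sqrt{K}\beta(\delta/a)$ term. A secondary technicality is the careful interplay of $\ell_2$ and $\norm{\cdot}_{\rm\Sigma}$ norms (and the $\mathrm{LReL}_\alpha$ reparametrization) in bounding $R$, and ensuring the $O(\sqrt{n/m})$ operator-norm contribution of the off-subspace residual does not inflate the logarithmic factor in the required $m$.
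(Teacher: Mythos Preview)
Your overall strategy matches the paper's: replace Bora's global Lipschitz hypothesis by the Jacobian bound P1 and the local-linearity bound P2, cover the constrained latent set, push the net through $G$ as a union of low-dimensional ellipsoids, and then run Gaussian concentration plus a union bound. The latent covering differs only cosmetically---the paper exploits the ellipsoidal geometry of $B^{p_1,p_2}_\v(r)$ directly (via an ellipsoid-covering bound) and takes $\epsilon=\delta/a$ with a nontrivial secondary cover of each Jacobian ellipsoid, whereas you pass through a crude Euclidean ball and take $\epsilon=\Theta(\delta/(a\sqrt K))$ so the secondary nets are trivial---but both routes land on $\log|\mathcal M|=O\big((p_2-p_1)\log(ar\norm{{\rm\Sigma}}_F/\delta)\big)$.

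The substantive gap is in controlling $\norm{H(\f_i-\hat\f_i)}_2$ uniformly over $\f_i\in G(\tilde B^{p_1,p_2}_\w(r))$. Your split into an on-subspace part (where $\norm{H|_{J(\hat\w)V}}_{\rm op}=O(1)$ after a union bound over the $|\mathcal N_\epsilon|$ subspaces) and an off-subspace nonlinear residual $r$ with $\norm{r}_2\le\sqrt K\,\beta(\epsilon)$ works for the first part, but the residuals $r=r(\w)$ range over an uncountable set in $\R^n$ with no low-dimensional structure, so you are forced to the global bound $\norm{Hr}_2\le\norm{H}_{\rm op}\norm{r}_2=O(\sqrt{n/m})\sqrt K\,\beta(\epsilon)$. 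This introduces an $n$-dependence that is absent from the statement: making $\sqrt{n/m}\cdot\sqrt K\,\beta(\epsilon)\le\delta$ would require $m$ to grow with $n$, not just with $p_2-p_1$. The ``secondary technicality'' you flag is in fact the crux. The paper resolves it by \emph{chaining} (its reformulation of Lemma~8.2 of \cite{bora}): apply the covering lemma at a geometric sequence of scales $\delta_i=\delta/2^i$, telescope $\f-\hat\f$ into differences between nearest points in adjacent-scale nets, and apply concentration to the finitely many such difference vectors at each level. The polynomial-decay assumption on $\beta$ is precisely what makes $\sum_i\big(\delta_i+\sqrt K\,\beta(\delta_i/a)\big)=O\big(\delta+\sqrt K\,\beta(\delta/a)\big)$ converge, so no global operator-norm bound on $H$ is ever invoked and $m$ stays free of $n$. (Minor: P2 already bounds the \emph{norm} of the linearization error, so the residual bound should read $\sqrt K\,\beta(\epsilon)$, not $(\sqrt K\,\beta(\epsilon))^{1/2}$.)
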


Based on this, the following reconstruction guarantee is arrived at.

\begin{theorem}\label{thm:picgm}
Let $H \in \mathbb{R}^{m\times n}$ satisfy $\textrm{S-REC}(G(\tilde{B}^{p_1,p_2}_\w(r)), \gamma, \delta+\sqrt{K}\beta(\delta/a))$. Let $\vec{n}$ be the measurement noise. Let $\w, \wpi \sim p_\w$. Let $\fpi = G(\wpi)$ be the known prior image. Let 
$$\tilde{\w} = [\wpi_{1:p_1}{}^\top \quad \w_{p_1:p_2}^\top \quad \wpi_{p_2:K}{}^\top]^\top.$$
Let $\tilde{\f} = G(\tilde{\w})$ represent the object to-be-imaged. Let $\g = H\tilde{\f} + \vec{n}$ be the imaging measurements. Let
\begin{align}\label{eqn:unrelaxed}
    \hat{\f} = \argmin_{\f\in G(\tilde{B}^{p_1,p_2}_\w(r))} \norm{\g - H\f}_2^2.
\end{align}
Then, 
\begin{align}
    \| \hat{\f} - \tilde{\f} \| \leq \frac{1}{\gamma} (2\norm{\vec{n}} + \delta + \sqrt{K}\beta(\delta/a))
\end{align}
with probability $1 - O(1/K)$.
\end{theorem}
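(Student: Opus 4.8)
The plan is to follow the S-REC recovery template of Bora \emph{et al.}; the only genuinely new ingredient is checking that the style-mixed latent $\tilde\w$ lies in the constraint set $\tilde{B}^{p_1,p_2}_\w(r)$ with high probability, after which the argument is a short chain of triangle inequalities.

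\emph{Step 1 (the ground truth is feasible).} First I would show $\tilde\f = G(\tilde\w) \in G(\tilde{B}^{p_1,p_2}_\w(r))$ with probability $1 - O(1/K)$. By the definition of $\tilde\w$, the equality constraints $\tilde\w_{1:p_1} = \wpi_{1:p_1}$ and $\tilde\w_{p_2:K} = \wpi_{p_2:K}$ hold automatically, so it remains to verify the ellipsoid bound $\norm{\mathrm{LReL}_\alpha(\tilde\w) - \bar\v}_{\rm\Sigma} \le r$ together with properties P1 and P2. Since StyleGAN2 is trained with style-mixing regularization, a style-mixed latent such as $\tilde\w$ behaves statistically like a genuine draw from $p_\w$, so \autoref{lem:ind_properties} applies to it: with probability $1 - O(1/K)$ it satisfies P1, P2 and $\norm{\Sigma^{-1/2}\mathrm{LReL}_\alpha(\tilde\w)}_2 \le \sqrt K(1+o(1))$, and the latter together with $\norm{\mathrm{LReL}_\alpha(\tilde\w)-\bar\v}_{\rm\Sigma} \le \norm{\Sigma^{-1/2}\mathrm{LReL}_\alpha(\tilde\w)}_2 + \norm{\bar\v}_{\rm\Sigma}$ gives the ellipsoid bound for $r = \Theta(\sqrt K)$. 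On this event $\tilde\w \in \tilde{B}^{p_1,p_2}_\w(r)$, hence $\tilde\f$ is feasible for the program in \eqref{eqn:unrelaxed}, whose minimum is attained because $\tilde{B}^{p_1,p_2}_\w(r)$ is compact (a bounded ellipsoid slice intersected with the closed sets defined by P1 and P2) and $G$ is continuous.

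\emph{Step 2 (optimality and S-REC).} Conditioned on the event of Step 1, optimality of $\hat\f$ and feasibility of $\tilde\f$ give $\norm{\g - H\hat\f}_2 \le \norm{\g - H\tilde\f}_2 = \norm{\vec{n}}_2$, whence by the triangle inequality $\norm{H(\hat\f - \tilde\f)}_2 \le \norm{\g - H\hat\f}_2 + \norm{\g - H\tilde\f}_2 \le 2\norm{\vec{n}}_2$. Both $\hat\f$ and $\tilde\f$ lie in $G(\tilde{B}^{p_1,p_2}_\w(r))$, so the assumed $\textrm{S-REC}(G(\tilde{B}^{p_1,p_2}_\w(r)), \gamma, \delta + \sqrt K\beta(\delta/a))$ yields
\[
  \gamma\norm{\hat\f - \tilde\f}_2 - \big(\delta + \sqrt K\beta(\delta/a)\big) \le \norm{H(\hat\f - \tilde\f)}_2 \le 2\norm{\vec{n}}_2 .
\]
Rearranging and dividing by $\gamma$ gives the claimed inequality, and the only randomness used is the $1 - O(1/K)$ event of Step 1.

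\emph{Main obstacle.} The delicate point is entirely in Step 1: justifying that the style-mixed latent $\tilde\w$, which is not literally a sample from $p_\w$, still inherits P1, P2 and the $\rm\Sigma$-norm concentration so that it lands in $\tilde{B}^{p_1,p_2}_\w(r)$. This is where the StyleGAN2 style-mixing regularization must be invoked and stated precisely; an alternative, if one prefers to weaken that assumption, is to replace $\tilde\f$ by its nearest point in $G(\tilde{B}^{p_1,p_2}_\w(r))$ and carry the resulting representation error as an additional additive term. Everything downstream is the standard S-REC bookkeeping.
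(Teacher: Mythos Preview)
Your proposal is correct and matches the paper's approach: the paper invokes Lemma~4.1 to put $\tilde\w$ in $\tilde{B}^{p_1,p_2}_\w(r)$ with probability $1-O(1/K)$, then cites Lemma~4.3 of Bora~\emph{et al.} for exactly the S-REC bookkeeping you wrote out in Step~2. Your flagged ``main obstacle'' is in fact a non-issue: under the paper's definition of $p_\w$ the style blocks $\u_1,\dots,\u_L$ are already independent, and since $p_1,p_2$ are multiples of $k$, the mixed vector $\tilde\w$ is literally distributed as $p_\w$, so Lemma~4.1 applies to it directly without any appeal to training-time style-mixing regularization.
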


Note that instead of optimizing over $\f\in G(\tilde{B}^{p_1,p_2}_\w(r))$, \autoref{eqn:picgm} proposes to solve a relaxed version, over $G(B^{p_1,p_2}_\w(r))$. In addition to this, as mentioned previously, \autoref{eqn:picgm} is non-convex and convergence is not guaranteed. Due to this, the proposed approach may not always yield an optimum to \autoref{eqn:unrelaxed}. However, whether or not the solution lies in $G(\tilde{B}^{p_1,p_2}_\w(r))$ can always be checked by checking if P1, P2, P3 are satisfied. For estimating in-distribution objects in the range of $G$, it was observed empirically that the conditions P1, P2, P3 are satisfied by the estimated $\hat{\w}$.

\begin{figure}[ht]
\begin{center}
\centerline{\includegraphics[width=\columnwidth]{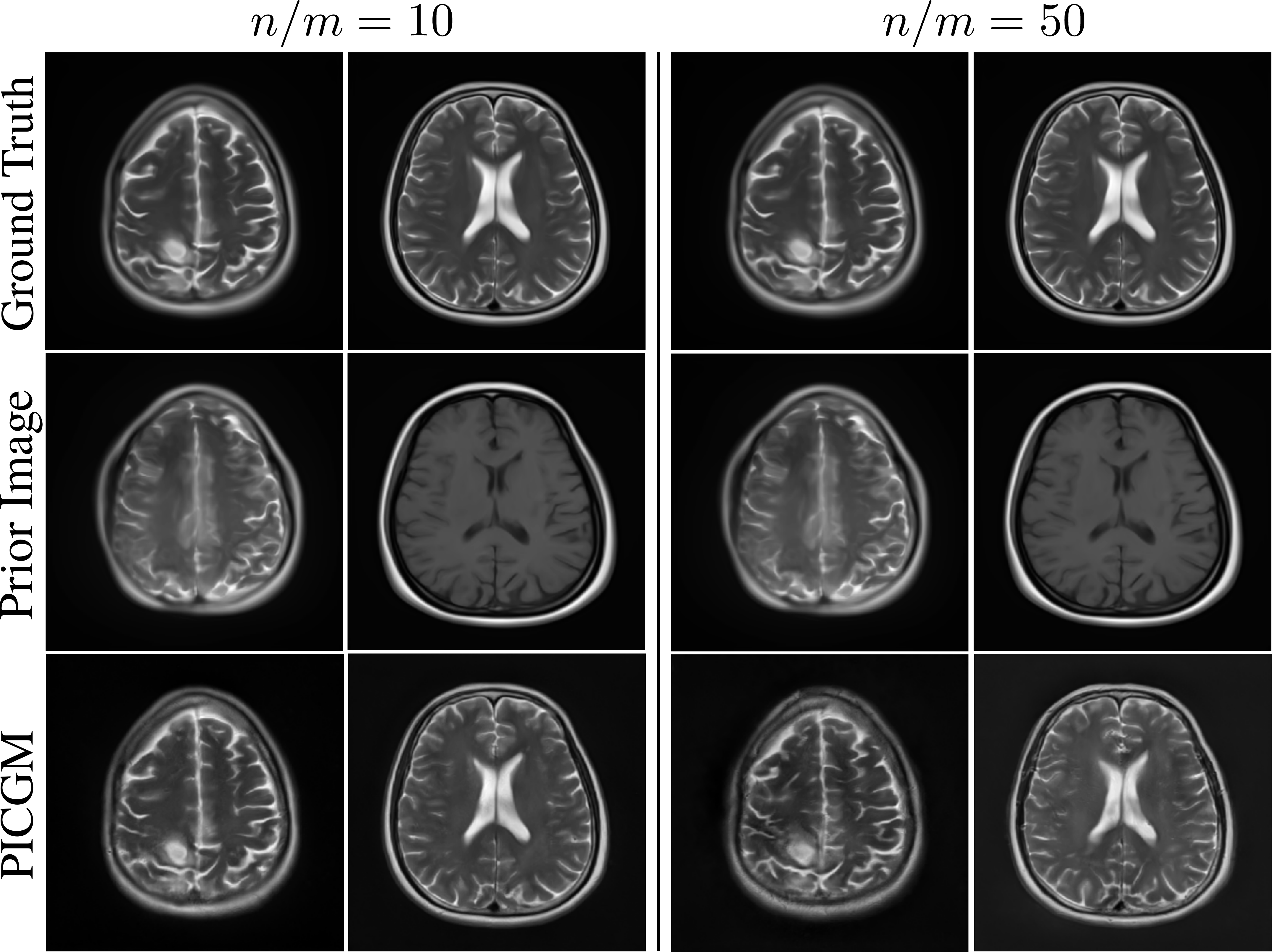}}
\caption{Ground truth, prior image and image estimated from Gaussian measurements with $n/m = 10$ and $n/m = 50$ using the proposed approach in the inverse crime case.}
\label{fig:inversecrime_images_gaussian}
\end{center}
\vskip -0.2in
\end{figure}

\begin{figure}[ht]
\begin{center}
\centerline{\includegraphics[width=\linewidth]{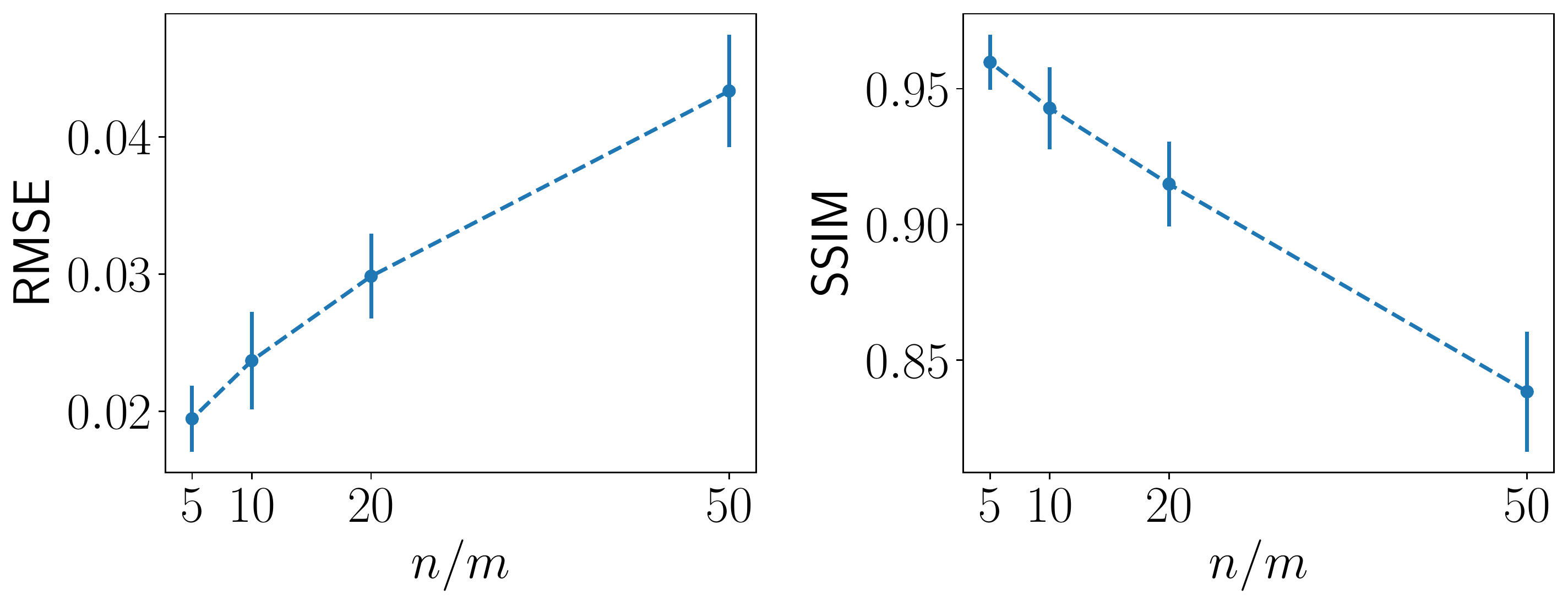}}
\caption{Ensemble RMSE and SSIM values for in the inverse-crime setting, for various subsampling ratios. The error bars show a span of one standard deviation.}
\label{fig:inversecrime_rmse_ssims}
\end{center}
\vspace{-10pt}
\end{figure}

\section{Numerical Studies}\label{sec:numerical}
{The numerical studies were split into three parts - (1) inverse-crime study, where the object was directly sampled from the StyleGAN2 and measurements were simulated using a Gaussian forward model, (2) face image study, where real face images were used to simulate noisy measurements using a Gaussian forward model, and (3) MR image study, where real brain MR images were used to simulate stylized undersampled MRI measurements with noise. Additionally, the robustness of the proposed approach to misalignment between the ground truth and the prior image was numerically assessed in the MR image study.  }

\textit{1) Dataset and forward model:}

{For the inverse-crime study, StyleGAN2 was trained on a composite brain MR image dataset consisting of a total of 200676 T1 and T2 weighted images of size 256$\times$256 from the fastMRI initiative database \cite{fastmri} and 866 images from the brain tumor progression dataset \cite{tumorprog}. These images were resized to 256$\times$256. For the face image study, a StyleGAN2 with an output image size of 128$\times$128$\times$3 was trained on images from the Flickr-Faces-HQ (FFHQ) dataset \cite{stylegan}. For the MR image study, StyleGAN2 was trained on a composite brain MR image dataset consisting of 164741 T1 and T2 weighted images from the fastMRI database, 686 images from the Brain tumor progression dataset, 2206 T1 and T2 weighted images from the TCIA-GBM dataset \cite{tcia_gbm}, and 36978 T2 weighted images from the OASIS-3 dataset \cite{oasis}.}

For evaluating reconstruction performance in the inverse crime studies, iid images were sampled from the generative model, where each style vector $\u_i$ was sampled independently from the mapping network, and the composite vector $\w = [\u_1, \u_2, \dots, \u_L]$ was used to sample the image. This was treated as the prior image. Four of the style vectors, $\u_6 \dots \u_9$ were then replaced by new styles $\u'_6 \dots \u'_9$, which were each again sampled independently using the mapping network. This was treated as the ground truth object to-be-recovered. Reconstruction performance was evaluated on a dataset of 50 such images. An iid Gaussian matrix $H \in \mathbb{R}^{m\times n}$ was used as the forward model and real-valued iid Gaussian noise with signal-to-noise ratio (SNR) 20 dB was added to the measurements. The reconstruction performance was evaluated for undersampling ratios $n/m = 5, 10, 20$ and $50$.

{For evaluating reconstruction performance in the face image study, 17 pairs of face images corresponding to differences in hairstyles, hair color and facial expressions were obtained from the stock-image hosting service Shutterstock \cite{shutterstock}. Images were manually inspected to ensure that one of the images in the pair was not a digitally altered version of the other. The images were manually cropped and resized to 128$\times$128$\times$3. In the numerical experiments, one of the images in the pair was used as the prior image. An iid Gaussian matrix with $n/m = 50$ was used to generate the measurements from the second image, which was treated as the ground truth. Real-valued iid Gaussian noise with 20 dB SNR was added to the measurements.}

{The brain tumor progression dataset contains pairs of brain MR images of test subjects, separated by time on the scale of a few months. The second image in the pair shows progression of the brain tumor with respect to the first image.} For evaluating the reconstruction performance in the case of the MR image study, a dataset of 22 held-out image pairs from the tumor progression dataset were used. The first image was used as the prior image. Imaging measurements were simulated from the second image, which was treated as the ground truth. A Fourier undersampling forward model, described as
$
H = \mathbf{m} \odot \mathcal{F} \in \mathbb{R}^{m\times n}
$
was used to simulate MRI measurements. Here, $\mathbf{m}$ corresponds to a binary mask, and $\mathcal{F}$ corresponds to the 2D discrete Fourier transform. This forward operator fully samples a fraction of the lower frequencies and randomly subsamples a fraction of high frequencies of the image. Five different undersampling ratios are considered: $n/m = 2, 4, 6, 8$ and $12$. Complex iid Gaussian noise with 20 dB SNR was added to the measurements.

\textit{2) Generative network training details:}
The StyleGAN2 architecture proposed in \cite{stylegan2} was used. For an image size of {$2^i\times2^i$, it contains $2(i-1)$ layers split across $i-1$ resolution levels.} The default latent space dimensionality of 512 was maintained. The networks were trained using Tensorflow 1.14/Python \cite{tensorflow} on an Intel Xeon E5-2620v4 CPU @ 2.1 GHz and four Nvidia TITAN X graphics processing units (GPUs)\footnote{Weights of the trained networks for our models can be found at: \url{https://databank.illinois.edu/datasets/IDB-4499850}}.

\textit{3) Baselines:}
For obtaining estimates of the true object from $\g$, the
performance of the following reconstruction methods were qualitatively and quantitatively compared -- (1) Penalized least squares with TV regularization (PLS-TV), (2) compressed sensing using StyleGAN2 (CSGM) mentioned in \autoref{eqn:csgm_stylegan}, (3) prior image-constrained compressed sensing (PICCS) mentioned in \autoref{eqn:piccs}, and (4) the proposed method (PICGM) introduced in \autoref{eqn:picgm}. Note that the first two  methods described do not utilize information from the prior image, while the last two do. For PICCS, discrete difference operator (corresponding to TV semi-norm) was used as the sparsifying transform $\rm \Phi$, and for the transform $\rm\Psi$, a 2D Haar wavelet transform of level 7 (i.e. equal to the number of resolution levels in StyleGAN2) was utilized. For evaluating PICGM, the latent representation $\wpi$ of the prior image was computed with the help of the trained StyleGAN using the procedure introduced in \autoref{eqn:inversion}. {The regularization parameters for all the methods were tuned using either a line search or a grid search depending upon the number of regularization parameters, and the setting giving the lowest ensemble root mean-squared error (RMSE) was chosen. The image estimates obtained by use of each of the four methods were quantitatively evaluated using the RMSE and structural similarity (SSIM). All algorithms were implemented using Tensorflow 1.14/Python.\footnote{The Tensorflow/python implementation of the reconstruction methods can be found at \url{https://github.com/comp-imaging-sci/pic-recon}}}

\textit{4) Testing the robustness to misalignments:}
The robustness of PICGM as compared to PICCS was numerically analyzed for the MR image study, by considering configurations where the ground truth is misaligned with the prior image. Five such configurations were considered. For each configuration $C_i$, the test ground truth images were each first rotated by $2i$ degrees clockwise or anticlockwise randomly with equal probability, and then translated by $2i$ pixels in a uniformly random direction. Estimates of the new ground truths were obtained with the help of PICCS and PICGM from 8 fold undersampled Fourier measurements, and the ensemble RMSE and SSIM performances of the two algorithms were compared.

\begin{figure}[ht!]
\vspace{-5pt}
\begin{center}
\centerline{\includegraphics[width=\columnwidth]{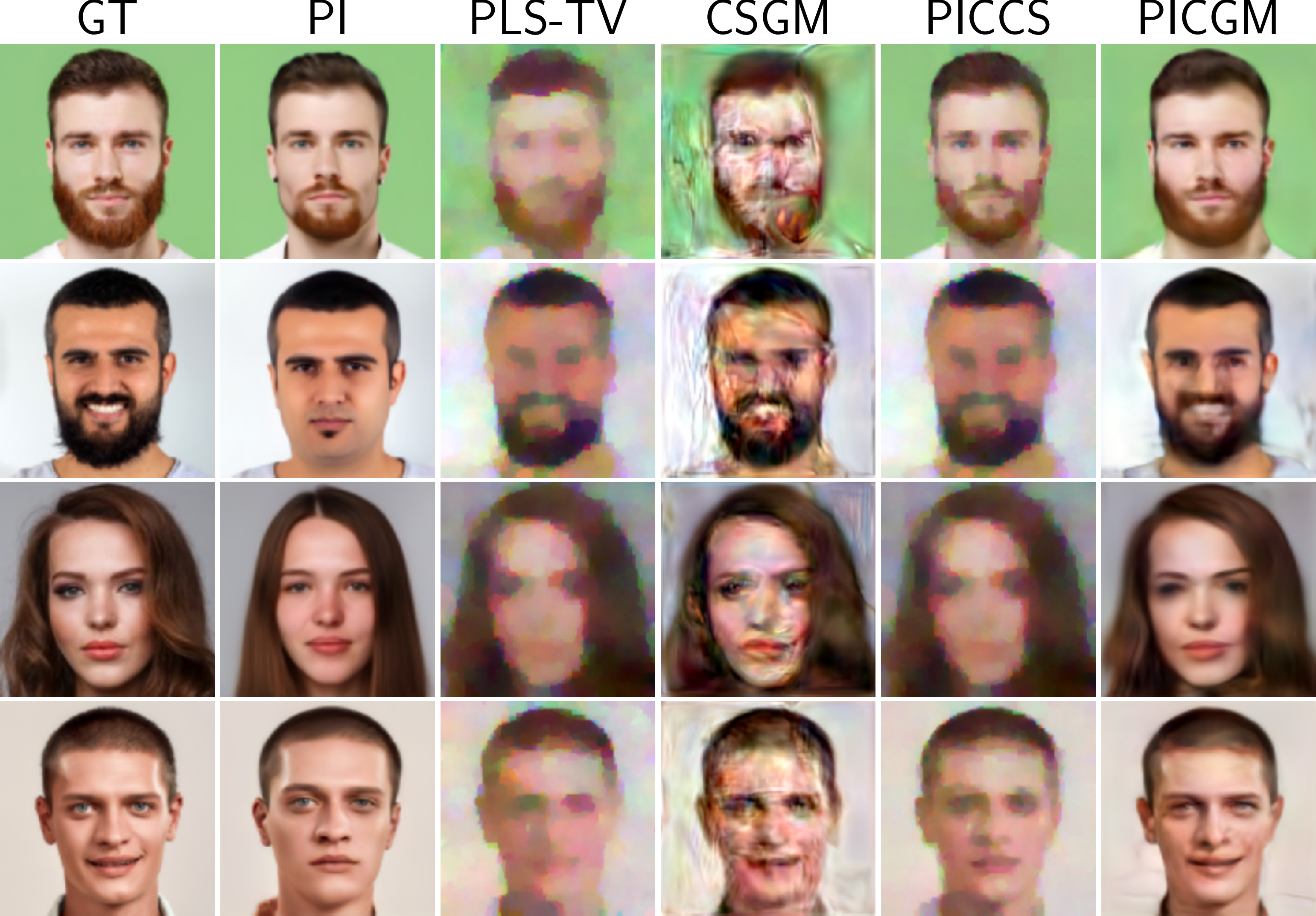}}
\caption{Ground truths (GT), prior images (PI) and image estimated from Gaussian measurements with $n/m = 50$ using the proposed approach in the face image study (zoom in for clarity).}
\label{fig:face_images_gaussian_0.02}
\end{center}
\vskip -0.2in
\end{figure}

\begin{figure}[ht!]
\vspace{-10pt}
\begin{center}
\centerline{\includegraphics[width=\columnwidth]{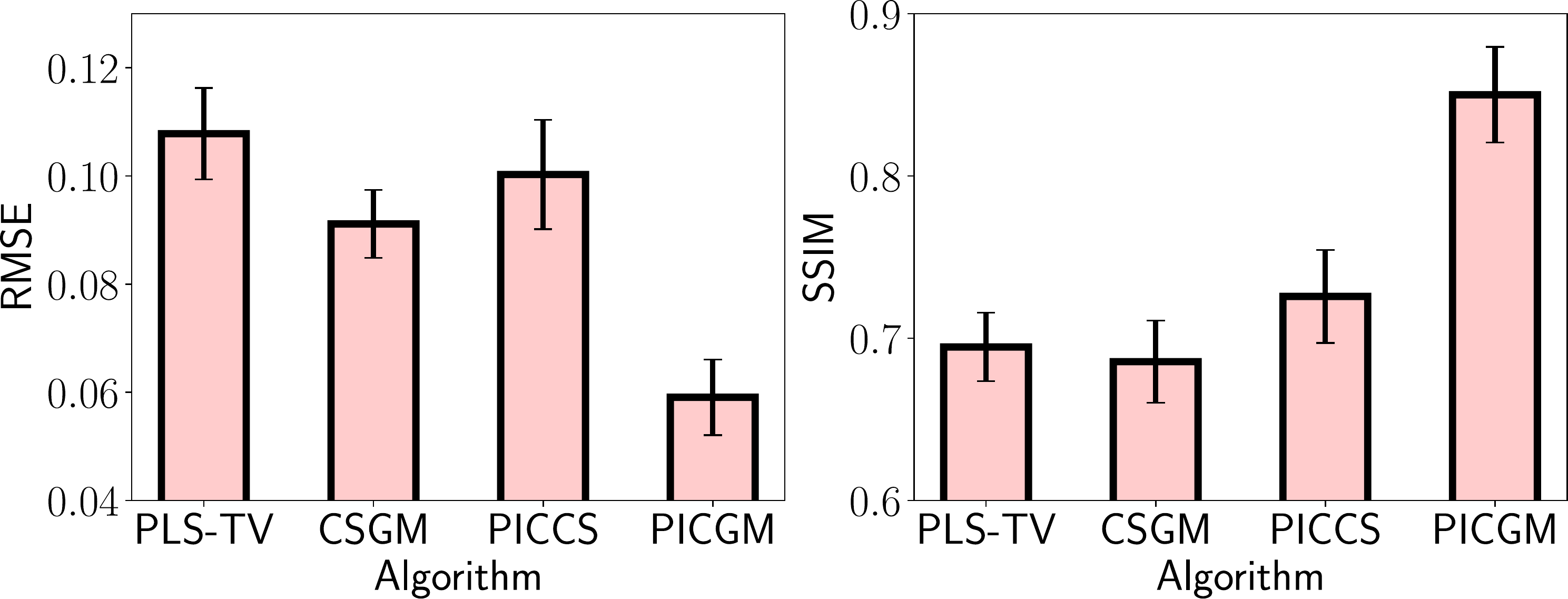}}
\caption{Ensemble RMSE and SSIM values for in the face image study for $n/m = 50$. The error bars span one standard deviation.}
\label{fig:faces_rmse_ssim}
\end{center}
\vskip -0.2in
\end{figure}

\begin{figure*}[ht]
\vskip 0.2in
\begin{center}
\centerline{\includegraphics[width=\linewidth]{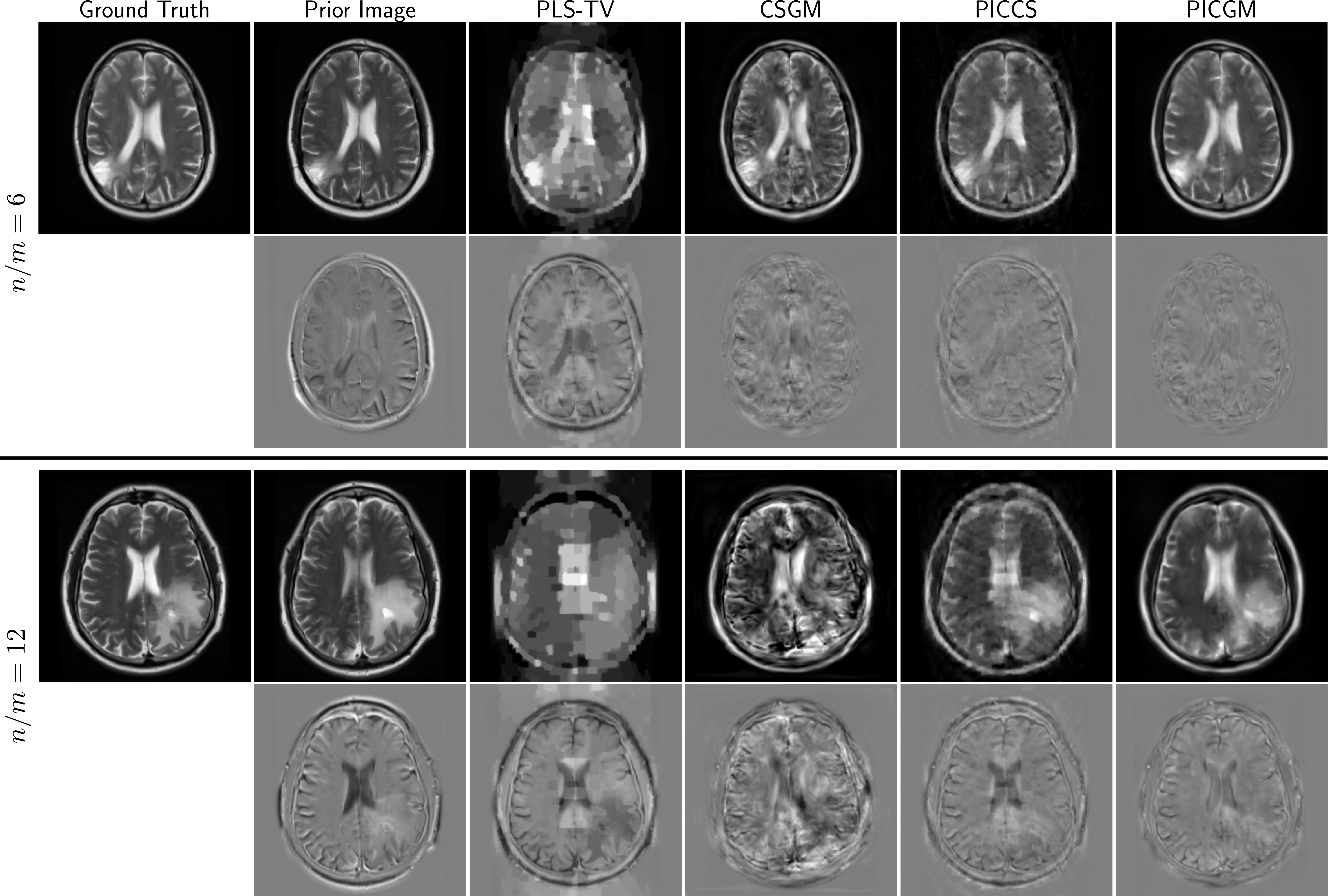}}
\caption{Ground truth, prior image, and images reconstructed from simulated MRI measurements with $n/m = 6$ and $n/m = 12$ along with difference images for the MR image study}
\label{fig:images_mask_rand_6x_12x}
\end{center}
\vskip -0.2in
\end{figure*}

\begin{figure}[ht]
\vskip 0.2in
\begin{center}
\centerline{\includegraphics[width=\linewidth]{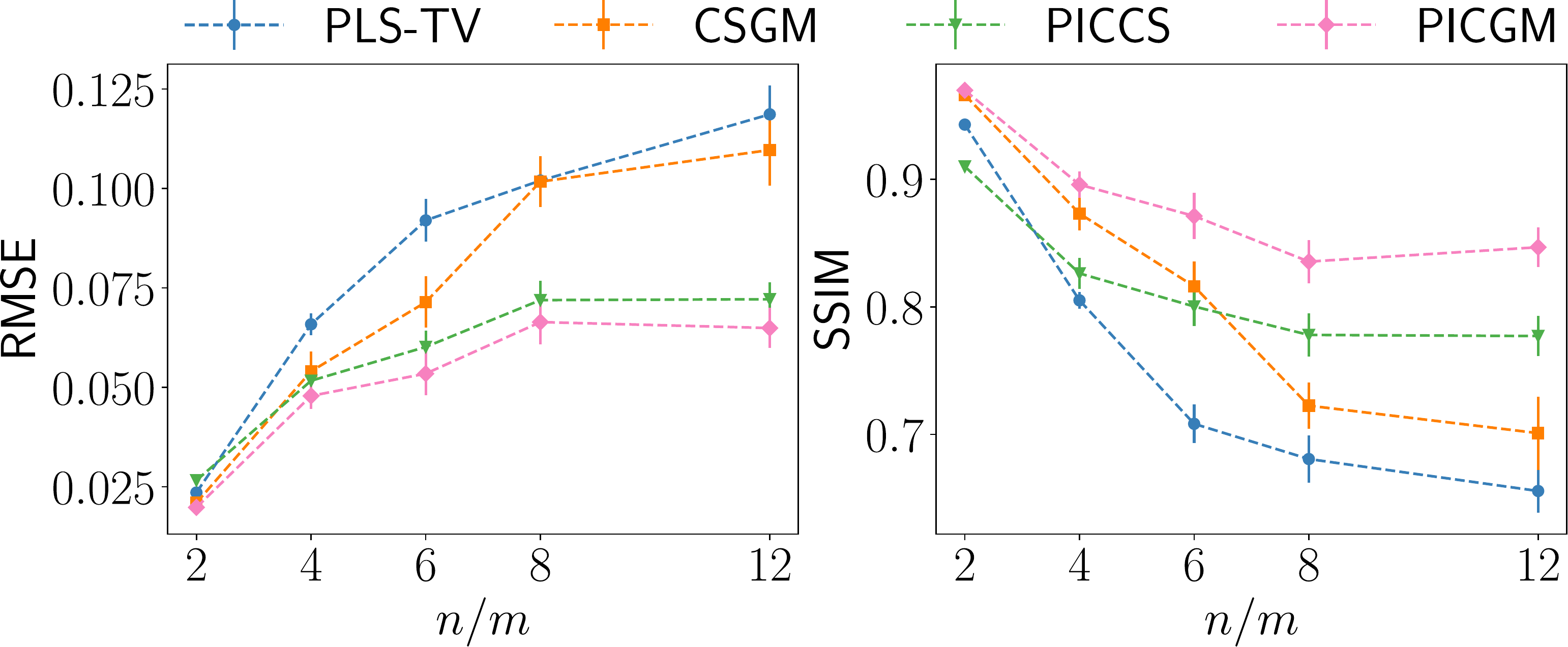}}
\caption{Ensemble RMSE and SSIM values for in the MR image study. The error bars show a span of one standard deviation.}
\label{fig:tumorprog_rmse_ssim}
\end{center}
\vskip -0.2in
\end{figure}

\section{Results}\label{sec:results}

Figure \ref{fig:inversecrime_images_gaussian} shows some of the reconstructed images for the inverse crime case, with $n/m = 10$ and $n/m = 50$, respectively. The RMSE and SSIM values over the ensemble of images are shown in \autoref{fig:inversecrime_rmse_ssims}. It can be seen that the proposed method performs well in terms of RMSE and SSIM even in the severe undersampling case, such as $n/m = 50$. 

{Figure \ref{fig:face_images_gaussian_0.02} shows some of the reconstructed images from the face image study. The ensemble RMSE and SSIM values are shown in \autoref{fig:faces_rmse_ssim}. It can be seen that the proposed algorithm outperforms the other algorithms considered in terms of perceptual quality, as well as RMSE and SSIM. This is because it is able to better capture the semantic differences between the sought-after and the prior image. }

Figure \ref{fig:images_mask_rand_6x_12x} shows the images reconstructed from simulated MRI measurements with undersampling ratios $n/m = 6$ and $n/m = 12$. Although PLS-TV and CSGM have difficulty in properly recovering the image, CSGM has noticeably better defined boundaries. Also, note that the ground truth and the prior image are visually similar in certain regions, and different in others. It can be seen that visually, the performance of the proposed method is the best. Ensemble RMSE and SSIM values for the MR image study are shown in \autoref{fig:tumorprog_rmse_ssim}. The results of the robustness study as shown in \autoref{fig:misalignment} indicate that while the proposed approach is robust to misalignments of the ground truth with respect to the prior image in the form of translations and rotations, the PICCS

\begin{minipage}{0.48\linewidth}
method breaks down. An example for $i = 4$ is shown in \autoref{fig:robustness_fig}. Additional images for the various studies are included in the supplementary file. \autoref{alg:projadam} takes around 5 minutes to converge to an optimal solution on a single Nvidia 1080 TX GPU. \\
\end{minipage}
\noindent\hfill\begin{minipage}{0.48\linewidth}
\captionsetup{type=figure}
\includegraphics[width=\linewidth]{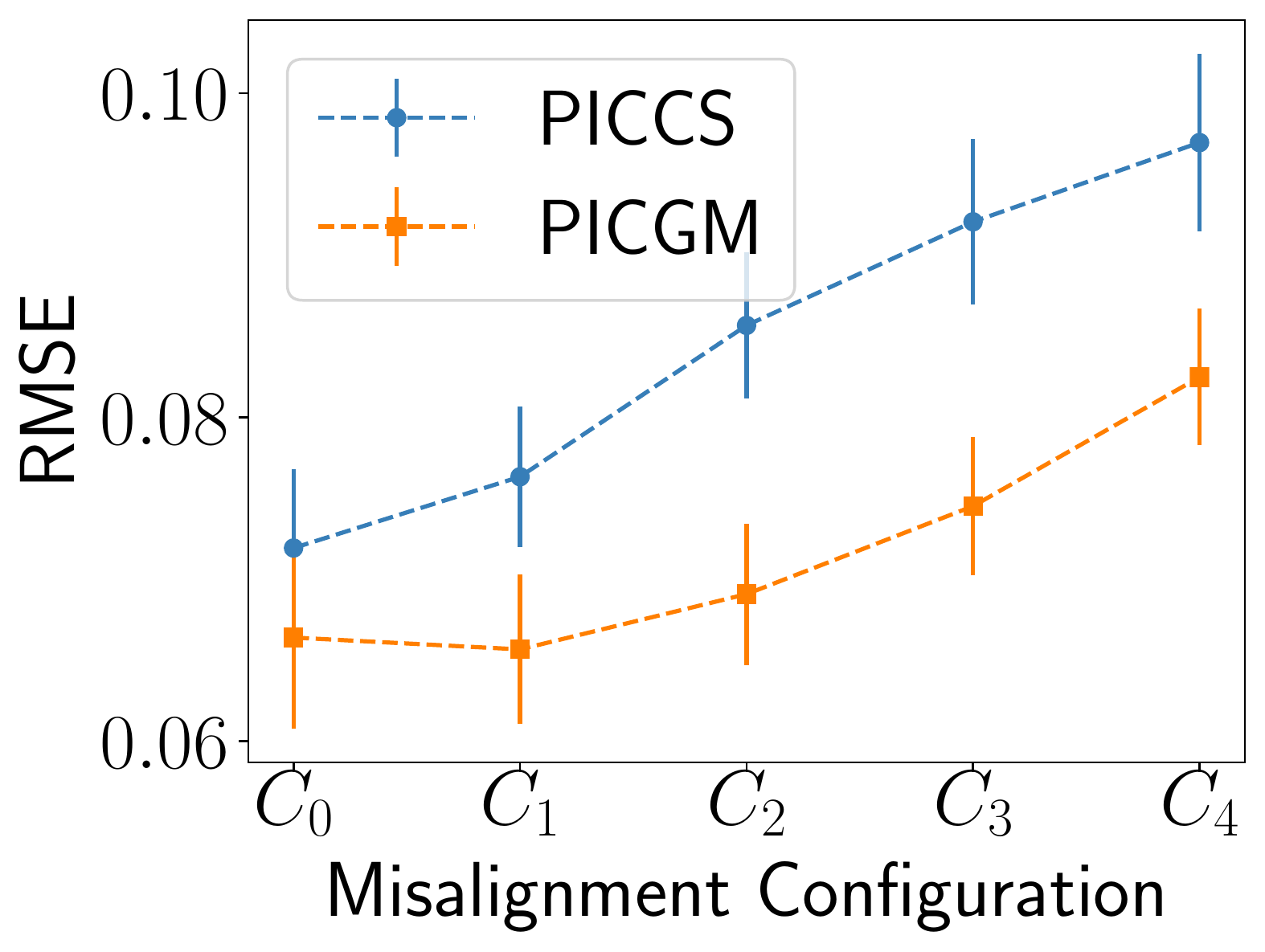}
%\vspace{0.01cm}
\captionof{figure}{Ensemble RMSE values for the MRI image study, with misaligned ground truth.}
\label{fig:misalignment}
\end{minipage}

\begin{figure}[ht]
% \vskip 0.2in
\begin{center}
\centerline{\includegraphics[width=\linewidth]{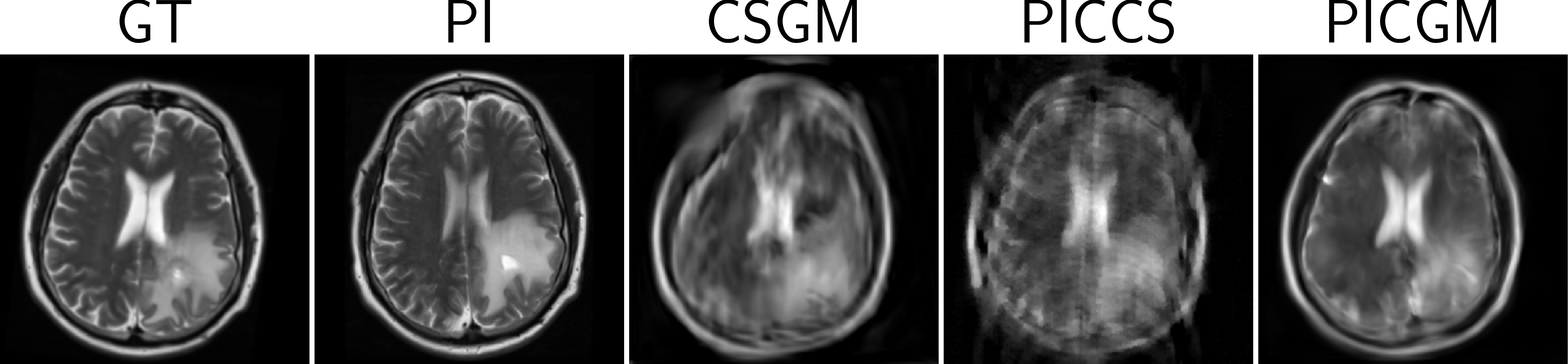}}
\caption{Ground truth (GT), prior image (PI) and image estimates from 8 fold undersampled measurements for misalignment configuration $C_4$}
\label{fig:robustness_fig}
\end{center}
\vspace{-10pt}
% \vskip -0.2in
\end{figure}

% discussion

\section{Discussion and Conclusion}\label{sec:concl}
Firstly, as seen in the numerical studies presented here, the object to-be-imaged and the prior image can be similar in some ways, while differing in others, due to factors such as evolution of the object over time, changes in the exact configuration of the imaging instrument, the depth at which the tomogram slice is recorded, changes in the defining features due to the tumor, {changes in the contrast mechanism used, and small changes in pose and orientation.
Such changes cannot always be effectively modeled by use of the sparse-differences model assumed in the case of PICCS. PICGM is able to outperform the other methods examined due to the fact that the StyleGAN gives control over individual semantic features in an image via the disentangled latent space.}

An important point that needs to be addressed is the choice of $p_1, p_2$ which determines which styles are kept fixed to those of the prior image and which styles are optimized over. Since we require $p_1, p_2$ to be multiples of $k$, the maximum choices for each of $p_1$ and $p_2$ are limited to $L$, the number of layers in the synthesis network. Moreover, in several realistic scenarios, the nature of the semantic differences between the ground truth and the prior image are known, and can be directly related to specific style vectors by performing style-mixing experiments on the trained StyleGAN. For example, small tumors correspond to the medium-scale features, and can be controlled by using the corresponding style vectors. When recovering an MRI image corresponding to one contrast mechanism with the help of a prior image from a different contrast mechanism, the style vectors corresponding to the variation in contrast are known beforehand. In our numerical studies, the values of $p_1, p_2$ as well as the regularization parameter $\lambda$ were kept fixed over the test dataset. Since the dataset may not be homogeneous with respect to the differences between the prior image and ground truth, further investigation is needed to devise an adaptive strategy for picking $p_1, p_2$. Several techniques for automatically picking regularization parameters exist in the compressive sensing literature which could be adapted to the proposed method \cite{refmri, autoreg}. In the presented numerical studies, the best parameters were selected by tuning over 4-5 $(p_1, p_2)$ pairs. The styles can be grouped into coarse, medium and fine for choices of $p_i$ between 0-2$k$, 2$k$-8$k$ and 8$k$-14$k$ respectively. As long as $p_i$ resided in the correct group, a $\sim$10\% median degradation in RMSE across the incorrect choices of $(p_1, p_2)$ was encountered.

The theoretical analysis presented here potentially offers a way to analyze the stability of the reconstructed estimate. In particular, the presented analysis applies to non-uniform recovery of in-distribution objects, that are sufficiently regular, in the sense that the Jacobian of the generative network computed at their latent representation is well behaved. This means that small changes in the estimated $\hat{\w}$ due to the measurement noise do not correspond to huge changes in the estimated image, which is what is obtained via the presented recovery guarantees. Hence, if the proposed algorithm converges to a solution that does not satisfy the properties presented in \autoref{lem:ind_properties}, it likely has converged to a highly unstable and nonlinear region of the optimization landscape. This provides a certain degree of graceful failure, in the face of hallucinations that may occur in the image reconstruction \cite{hallucinations}. {It must be noted that certain assumptions in the theoretical analysis, such as the validity of a useful S-REC for non-Gaussian sensing matrices, and obtaining near-optimal approximations to \autoref{eqn:picgm} are not completely realistic, but have been made in recent literature, and there has been progress towards addressing them \cite{admm_csgm, ilo}}.

Lastly, a task-informed approach to evaluating image reconstruction algorithms is necessary in addition to traditional image quality metrics employed in this study. 

\section*{Acknowledgements}
The authors would like to thank Sayantan Bhadra and Weimin Zhou for their help. This work was supported in part by NIH Awards EB020604, EB023045, NS102213, EB028652, and NSF Award DMS1614305.

% Acknowledgements are not included in this version.

% \nocite{langley00}

\bibliography{picgm}
\bibliographystyle{icml2021}

%%%%%%%%%%%%%%%%%%%%%%%%%%%%%%%%%%%%%%%%%%%%%%%%%%%%%%%%%%%%%%%%%%%%%%%%%%%%%%%
%%%%%%%%%%%%%%%%%%%%%%%%%%%%%%%%%%%%%%%%%%%%%%%%%%%%%%%%%%%%%%%%%%%%%%%%%%%%%%%
% DELETE THIS PART. DO NOT PLACE CONTENT AFTER THE REFERENCES!
%%%%%%%%%%%%%%%%%%%%%%%%%%%%%%%%%%%%%%%%%%%%%%%%%%%%%%%%%%%%%%%%%%%%%%%%%%%%%%%
%%%%%%%%%%%%%%%%%%%%%%%%%%%%%%%%%%%%%%%%%%%%%%%%%%%%%%%%%%%%%%%%%%%%%%%%%%%%%%%
\onecolumn
\appendix
\section{Theoretical Analysis}
As described in the main manuscript, the theoretical analysis presented here provides a non-uniform recovery guarantee, which applies to \textit{in-distribution objects} that are in the range of the StyleGAN $G$, further constrained by styles from the prior image. 
In contrast to the theoretical results presented in \cite{bora} where the Lipschitz constant of the generator network $G$ is used to bound the number of measurements, 
the analysis presented here is in terms of the expected Frobenius norm of its Jacobian.
Due to this, the presented analysis applies to generative networks having Lipschitz constants that are large as compared to the typical scaling of differences under the network, or generative networks that are not Lipschitz stable, such as StyleGAN2. The price paid is in terms of the guarantee being non-uniform, and allowing for an additional network-dependent term in the reconstruction error. Nevertheless, as we show, the proposed guarantees are useful in analyzing the behaviour of generative model-constrained reconstruction in general, and the PICGM method in particular. 

We begin by defining the notation used.
\begin{notation}
{$~$}\\
\vspace{-20pt}
\begin{enumerate}
    \item Let $p_{\w}$ denote the distribution of the extended latent space vector $\w = [\u_1^\top ~ \u_2^\top ~\dots ~ \u_L^\top]^\top \in \Wp$, where $\u_i = g_{\rm mapping}(\z_i)$, $\z_i \sim \mathcal{N}(0,I_k)$, $\z_i$'s are independently distributed, with $I_k$ denoting the real $k\times k$ identity matrix.
    \item Recall that as evidenced by \cite{stylegan_inv_gaussian}, it can be assumed that if $\w \sim p_\w$, $\v = \text{LReL}_\alpha(\w)$ is distributed as a multivariate Gaussian distribution. Let $\bar{\v}$ and $\Sigma$ be its mean and covariance matrix respectively. Recall that $\text{LReL}_\alpha$ denotes the leaky-ReLU nonlinear activation, defined as
    \begin{align}
            \text{LReL}_{\alpha}(\x)_i = \left\{\begin{matrix} x_i, \quad x_i \geq 0, \\
                                                \alpha x_i, \quad x_i < 0.
    \end{matrix}\right.
    \end{align}
    The value of $\alpha$ is the reciprocal of the scaling value for negative numbers included in the last leakyReLU layer in the mapping network $g_{\rm{mapping}}$.
    \item Let $p_1, p_2$ be positive integer multiples of $k$, with $1 \leq p_1 < p_2 \leq K$. Let $P = p_2 - p_1$. Let $\mathcal{W}^+_{p_1,p_2}$ be the $P$-dimensional subspace of $\mathcal{W}^+$ containing all $\w$ such that $\w_{1:p_1} = \mathbf{0}, \w_{p_2:K} = \mathbf{0}$.
    \item Let
    \begin{align*}
        B_\w^K (r) &:= \left\{ \w ~ s.t. ~ \norm{\mathrm{LReL}_\alpha(\w) - \bar{\v}}_\Sigma \leq r\right\},\\
        B_\v^K (r) &:= \left\{ \v ~ s.t. ~ \norm{\v - \bar{\v}}_\Sigma \leq r\right\}.\\
        \intertext{Similarly, let}
        B_\w^{p_1,p_2} (r) &:= \Big{\{}\w ~ s.t. ~ \w \in B_\w^K (r), \w_{1:p_1}=\wpi_{1:p_1}, \w_{p_2:K} = \wpi_{p_2:K}\Big{\}},\\
        B_\v^{p_1,p_2} (r) &:= \Big{\{}\v ~ s.t. ~ \v \in B_\v^K (r), \v_{1:p_1}=\v^{\rm(PI)}_{1:p_1}, \v_{p_2:K} = \v^{\rm(PI)}_{p_2:K}\Big{\}},
    \end{align*}
    where for a vector $\x \in \R^{K}$, $\norm{\x}_\Sigma^2 := \x^\top \Sigma^{-1}\x$. Note that $\alpha > 1$ and $1 \leq p_1 < p_2 \leq K$.
    
    \item Let $J(\w)$ denote the Jacobian of the synthesis network $G$ evaluated at $\w$. Let $J_{p_1:p_2}(\w)$ denote the Jacobian of $G$ with respect to $\w_{p_1:p_2}$ , evaluated at $\w$.
\end{enumerate}

\end{notation}

We first prove the following series of lemmas.

\begin{lemma}[]\label{lem:enet_Bk}

Let $\sigma_1 \geq \sigma_2 \geq \dots \geq \sigma_K$ be the singular values of $\sqrt{\Sigma}$.
Let ${\bm\sigma} = [\sigma_1 ~ \sigma_2 ~ \dots ~ \sigma_K]^\top$.
For $r > 1$, if $\mathcal{N}^{p_1,p_2}_{\w}(\epsilon)$ is an optimal $\epsilon$-net of $B^{p_1,p_2}_\w(r)$, then
\begin{align*}
    \log | \mathcal{N}^{p_1,p_2}_\w | \leq P\log\left[ \frac{6r}{\epsilon}\left(\epsilon + \frac{\norm{\bm\sigma}_2}{\sqrt{K}}\right) \right].
\end{align*}
\end{lemma}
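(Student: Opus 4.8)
The plan is to remove the leaky-ReLU nonlinearity, reduce to a low-dimensional ellipsoid, and estimate its covering number by a volumetric bound. Write $N(S,\epsilon)$ for the minimal size of an $\epsilon$-net of a set $S$. Since $\alpha>1$, $\mathrm{LReL}_\alpha$ acts coordinate-wise with each coordinate slope in $[1,\alpha]$, so $\mathrm{LReL}_\alpha$ is non-contractive on $\R^K$ in Euclidean norm and its inverse $\mathrm{LReL}_{1/\alpha}$ is $1$-Lipschitz. The constraints $\w_{1:p_1}=\wpi_{1:p_1}$ and $\w_{p_2:K}=\wpi_{p_2:K}$ pass verbatim to $\v=\mathrm{LReL}_\alpha(\w)$, so $\mathrm{LReL}_\alpha$ restricts to a bijection $B^{p_1,p_2}_\w(r)\to B^{p_1,p_2}_\v(r)$; pushing an optimal $\epsilon$-net of the latter through the $1$-Lipschitz map $\mathrm{LReL}_{1/\alpha}$ yields an $\epsilon$-net of $B^{p_1,p_2}_\w(r)$ of the same cardinality (modulo the usual constant loss if one insists the net points lie inside the set). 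Since only the middle block $\v_{p_1:p_2}\in\R^P$ varies over $B^{p_1,p_2}_\v(r)$, I would therefore reduce to bounding $N(E,\epsilon)$, where $E\subseteq\R^P$ is the projection of $B^{p_1,p_2}_\v(r)$ onto those $P$ coordinates, since $|\mathcal N^{p_1,p_2}_\w|\le N(E,\epsilon)$.

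Next I would pin down $E$. Because $B^{p_1,p_2}_\v(r)\subseteq\{\v:\norm{\v-\bar\v}_\Sigma\le r\}$ and coordinate projection is monotone, $E$ is contained in the projection of this ellipsoid onto the middle block, which is the ellipsoid governed by the corresponding principal submatrix $\Sigma_{p_1:p_2}$ of $\Sigma$ itself (the covariance), not of $\Sigma^{-1}$. Its semi-axes are $r\sqrt{\lambda_i(\Sigma_{p_1:p_2})}$, and Cauchy interlacing for principal submatrices gives $\lambda_i(\Sigma_{p_1:p_2})\le\lambda_i(\Sigma)=\sigma_i^2$ for $i=1,\dots,P$. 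Hence $E$ lies in a translate of an ellipsoid $\mathcal E\subseteq\R^P$ whose semi-axes satisfy $a_i\le r\sigma_i$, so $N(E,\epsilon)\le N(\mathcal E,\epsilon)$.

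Then I would bound $N(\mathcal E,\epsilon)$, keeping in mind that the naive estimate $\prod_i(1+2a_i/\epsilon)$ is false (a Minkowski sum of an ellipsoid and a ball need not lie in any dilate of the ellipsoid), so the axes must be split by scale. For any $x\in\mathcal E$, the sub-vector along axes with $a_i\le\epsilon/2$ has squared norm $\sum x_i^2\le(\max a_i^2)\sum x_i^2/a_i^2\le(\epsilon/2)^2$, so those ``thin'' directions need no net; the complementary ``thick'' block is an ellipsoid all of whose semi-axes exceed $\epsilon/2$, covered at scale $\epsilon/2$ by the classical volume-ratio estimate. Combining the two pieces and substituting $a_i\le r\sigma_i$ yields $N(\mathcal E,\epsilon)\le\prod_{i=1}^P\max\!\big(1,Cr\sigma_i/\epsilon\big)$ for an absolute constant $C$. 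Finally, writing $\max(1,Cr\sigma_i/\epsilon)=\tfrac{Cr}{\epsilon}\max(\epsilon/(Cr),\sigma_i)$, factoring out $(6r/\epsilon)^P$, and applying AM--GM over the $P$ factors followed by Cauchy--Schwarz on $\sum_i\sigma_i$ collapses the product into $\big(\tfrac{6r}{\epsilon}(\epsilon+\|\bm\sigma\|_2/\sqrt K)\big)^P$, where $r>1$ and $P\le K$ absorb the residual constants; taking logarithms gives the claim.

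I expect the main obstacle to be that last packaging step. AM--GM leaves behind the arithmetic mean of the top $P$ singular values, and Cauchy--Schwarz bounds it by $\|\bm\sigma\|_2/\sqrt P$; sharpening this to $\|\bm\sigma\|_2/\sqrt K$ with a clean absolute constant is the delicate bookkeeping — it is immediate when $P=\Omega(K)$ and otherwise relies on mild regularity of the spectrum of $\Sigma$ — and one must track the $\epsilon/2$-versus-$\epsilon$ budget in the thin/thick split with comparable care. The remaining ingredients — the bi-Lipschitz reduction, the identification of $E$ as a sub-ellipsoid via the projection formula, and the Cauchy-interlacing bound on its semi-axes — are routine.
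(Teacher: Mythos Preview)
Your reduction through $\mathrm{LReL}_{1/\alpha}$ and your ellipsoid-covering argument are essentially what the paper does (the paper invokes Dumer's covering bound for ellipsoids, which is your thin/thick split). The substantive divergence is in the step you yourself flag as delicate, and there it is a genuine gap rather than mere bookkeeping.

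The paper does \emph{not} appeal to Cauchy interlacing or to any vague ``mild regularity of the spectrum''. It uses a structural fact you have not invoked: since $\w$ is formed by stacking $L$ i.i.d.\ copies $\u_1,\dots,\u_L$ through the mapping network, the covariance $\Sigma$ of $\v=\mathrm{LReL}_\alpha(\w)$ is block-diagonal with $L$ identical $k\times k$ blocks. Because $p_1,p_2$ are multiples of $k$, the sub-covariance $\mathrm{Cov}(\v_{p_1:p_2})$ is block-diagonal with $P/k$ copies of that same block. Hence if $\bm\sigma'$ denotes the singular values of $\sqrt{\mathrm{Cov}(\v_{p_1:p_2})}$, one has the \emph{exact equality}
\[
\frac{\norm{\bm\sigma'}_2}{\sqrt P}=\frac{\norm{\bm\sigma}_2}{\sqrt K},
\]
which is precisely what turns the AM--GM/RMS output into the stated bound. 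Your route via projection and interlacing only yields control by the top $P$ singular values of $\sqrt\Sigma$, giving $\norm{\bm\sigma}_2/\sqrt P$ at best; there is no way to upgrade this to $\norm{\bm\sigma}_2/\sqrt K$ without the block structure, so the ``residual constant absorption'' you propose does not go through in general.

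A secondary simplification: because $\Sigma$ is block-diagonal aligned with the $p_1,p_2$ partition, the cross-section $B^{p_1,p_2}_\v(r)$ is itself an ellipsoid with shape matrix $\mathrm{Cov}(\v_{p_1:p_2})^{-1}$ and radius at most $r$. There is no need to pass to the projected ellipsoid or to invoke interlacing at all.
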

\begin{proof}

First, note that if $\w \sim p_\w$, then the subsections of $\w$ corresponding to the different style inputs, i.e. $\w_{lk+1:(l+1)k}, ~ l=0,\dots, L-1$ are distributed such that $\w_{lk+1:(l+1)k}$ is independent to and identically distributed as $\w_{l'k+1:(l'+1)k}$ if $l\neq l'$. This implies that the singular values of $\sqrt{\Sigma}$ are degenerate to a certain degree. Specifically, 
\begin{align}\label{eqn:degeneracy}
    \sigma_{jL+1} = \sigma_{jL+2} = \dots = \sigma_{(j+1)L}, \quad j = 0,2,\dots,k-1
\end{align}

Let $\sigma_1' \geq \sigma_2' \geq \dots \geq \sigma_P'$ be the singular values of $\sqrt{{\rm Cov}(\v_{p_1:p_2})}$, and let $\bm{\sigma}' = [\sigma_1' ~ \sigma_2' ~ \dots ~\sigma_P']^T$. Therefore, by \autoref{eqn:degeneracy}, 
\begin{align}\label{eqn:degen2}
    \frac{\norm{\bm \sigma'}}{\sqrt{P}} = \frac{\norm{\bm \sigma}}{\sqrt{K}}.
\end{align}

Note that $B^{p_1,p_2}_\v(r)$ is an ellipsoid with center $\bar{\v}_{p_1:p_2}$ and principal radii of lengths $\sigma_i' r, ~ i = 1,2,\dots, P$. Assume for a moment, that there exists an integer $p$ such that $\sigma'_p r > \epsilon \geq \sigma'_{p+1} r$. Let $\mathcal{N}_\v (\epsilon)$ be an optimum $\epsilon$-net of $B^{p_1,p_2}_\v(r)$.

Therefore, by Theorem 2 in \cite{ellipsoid_cover},
\begin{align*}
    \log |\mathcal{N}_\v (\epsilon)| &\leq \sum_{i=1}^p \log \left( \frac{r\sigma_i'}{\epsilon} \right) + P\log 6,\\
    &\leq \log\left[\left(\frac{r}{\epsilon}\right)^P\prod_{i=1}^p \sigma_i' \prod_{i=p+1}^P \epsilon \right] + P\log 6, \tag{since $r > 1$,}\\\\
    &\leq P\log\left[ \frac{6r}{\epsilon}\left(\epsilon + \frac{1}{P}\sum_{i=1}^P{\sigma_i'}\right) \right], \tag{by AM-GM inequality,}\\
    &\leq P\log\left[ \frac{6r}{\epsilon}\left(\epsilon + \frac{\norm{\bm\sigma'}}{\sqrt{P}}\right) \right], \tag{by AM-RMS inequality,}\\
    &\leq P\log\left[ \frac{6r}{\epsilon}\left(\epsilon + \frac{\norm{\bm\sigma}}{\sqrt{K}}\right) \right].
\end{align*}

Observe that this bound is valid even if $\epsilon > \sigma'_1r$ or $\epsilon < \sigma'_Pr$.
Since $\alpha > 1$, $\text{LReL}_{1/\alpha}(.)$ is a bijective function with Lipschitz constant 1. Therefore, for every $\v_1 = \text{LReL}_\alpha(\w_1)$, and $\v_2 = \text{LReL}_\alpha(\w_2)$,
\begin{align*}
    \norm{\w_1 - \w_2} &\leq \norm{\v_1 - \v_2}.\\
    \intertext{Therefore,}
    \log |\mathcal{N}^{p_1,p_2}_\w (\epsilon) | &\leq P\log\left[ \frac{6r}{\epsilon}\left(\epsilon + \frac{\norm{\bm\sigma}}{\sqrt{K}}\right) \right]. \numberthis{}
\end{align*}
\end{proof}

Now, the following assumptions about StyleGAN are made.

\begin{enumerate}
    \item \textbf{Path length regularity}:
    \begin{align*}
    \mathbb{E}_{\w \sim p_\w} \big( \norm{J(\w)}_F - a \big)^2 < b,\tag{AS1}\label{eqn:as1}
    \end{align*}
    where $b > 0$ and $a = \mathbb{E}_\w \norm{J(\w)}_F$ are global constants. As described in the main manuscript, this assumption is inspired by the path-length regularization used in \cite{stylegan2}. Although during training $a$ is implemented as $\mathbb{E}_\w \mathbb{E}_{\y \sim \mathcal{N}(0,I_n)} \norm{J(\w)^\top\y}$, as per the Hanson-Wright inequality, this concentrates to $\mathbb{E}_\w \norm{J(\w)}_F$, when $\y$ is high-dimensional \cite{hdp_book}. The value of $a$ was estimated by empirically estimating $\mathbb{E}_{\w \sim p_\w} \mathbb{E}_{\y \sim \mathcal{N}(0,I_n)} \norm{J(\w)^\top\y}$ using 100 samples $\w$ drawn from $p_\w$ and 100 samples $\y$ drawn from $\mathcal{N}(0,I_n)$ for each sample of $\w$. $b$ was empirically estimated over the same dataset of samples using \autoref{eqn:as1}. The values of $a$ and $\sqrt{b}$ were estimated to be around 80.1 and 16.7 respectively for the specific StyleGAN2 trained and used in the inverse-crime study.
    
    \item \textbf{Approximate local linearity}:
    \begin{align*}
        \mathbb{E}_{\w \sim p_\w} &\max_{\begin{smallmatrix}\w'\\ \norm{\w'-\w}\leq \epsilon\end{smallmatrix}} \mathcal{L}(\w', \w) \leq \beta^2(\epsilon),\tag{AS2}\label{eqn:local_lin}\\
        \intertext{where}
        \mathcal{L}(\w',\w) &= \left\|G(\w') - G(\w) - J(\w)(\w' - \w) \right\|_2^2 
    \end{align*}
    This property essentially measures how close $G$ is to its linear approximation in an $\epsilon$-neighborhood around a point $\w$. For ease of notation, we will write 
    \begin{equation}
    \phi_{p_1,p_2}^2(\epsilon; \w) := \max_{\begin{smallmatrix}\w'\\ \norm{\w'-\w}\leq \epsilon \\ \w'-\w \in \mathcal{W}^+_{p_1,p_2} \end{smallmatrix}} \mathcal{L}(\w', \w),
    \end{equation}
    with $\phi_{p_1,p_2}(\epsilon; \w) \geq 0$. 
    Approximate estimates of $\beta^2(\epsilon)$ were obtained for several values of $\epsilon$ by first computing the Jacobian at a point $\w \sim p_\w$, and then iteratively maximizing $\mathcal{L}(\w',\w)$ using a projected gradient ascent-type algorithm.
    Figure \ref{fig:beta_plot} shows the plot of $\beta^2(\epsilon)$ versus $\epsilon$ estimated over a dataset of 100 samples $\w$ from $p_\w$ for the StyleGAN2 trained and used in the inverse-crime study.
    
\end{enumerate}

\textbf{Lemma 4.1.}
If $\w$ is a sample from $p_\w$, then it satisfies the following three properties with probability at least $1 - O(1/K)$:
\begin{align*}
\norm{J(\w)}_F &\leq \sqrt{K}a, \tag{P1}\label{eqn:p1}\\
\phi_{1,K}(\epsilon; \w) &\leq \sqrt{K}\beta(\epsilon)\tag{P2}\label{eqn:p2}\\
\norm{\Sigma^{-1/2}\mathrm{LReL}_{\alpha}(\w)}_2 &\leq \sqrt{K}(1+o(1))\tag{P3}\label{eqn:p3}
\end{align*}

\begin{figure}[ht]
\vskip 0.2in
\begin{center}
\centerline{\includegraphics[width=0.5\linewidth]{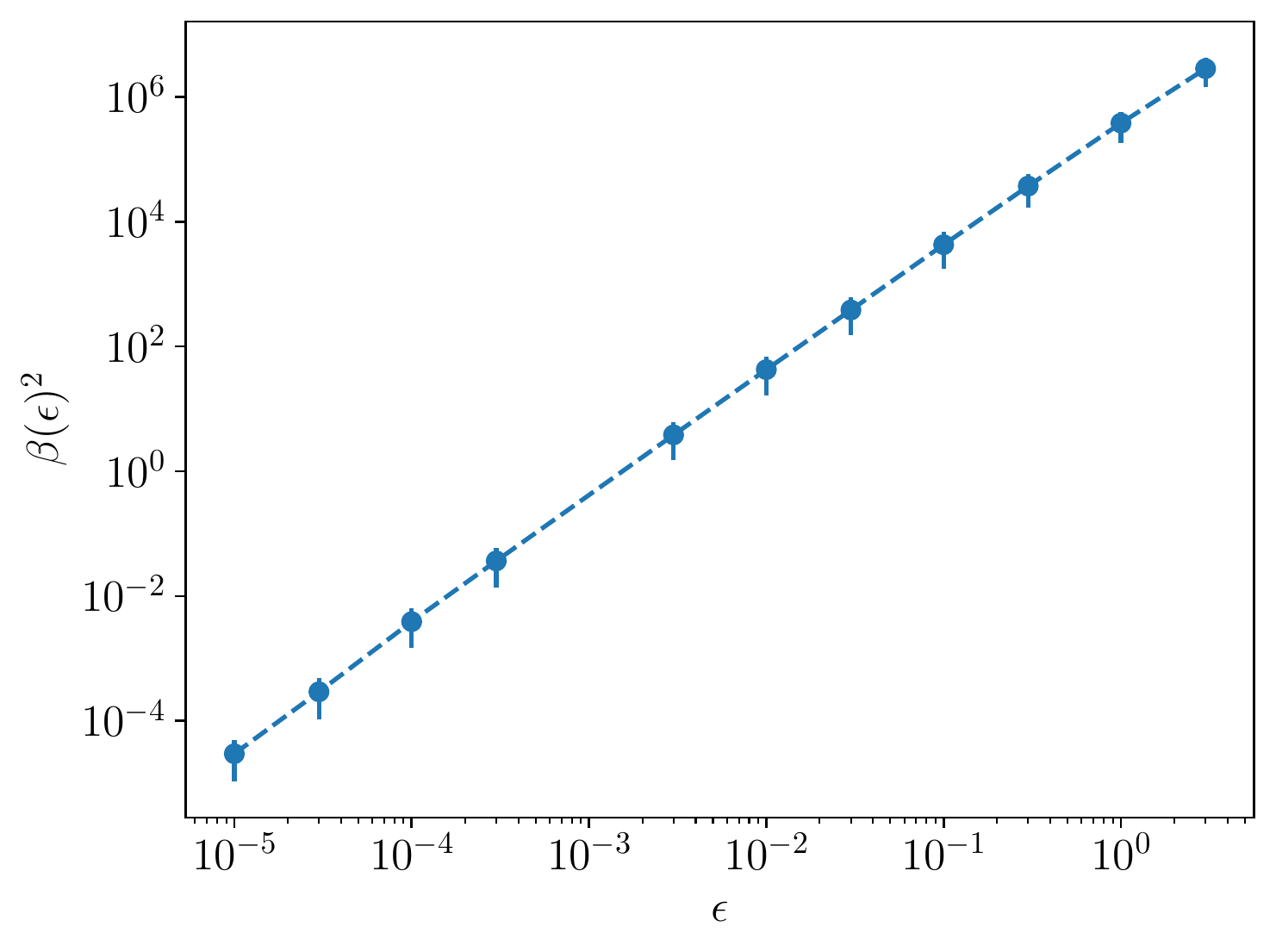}}
\caption{Plot of $\beta(\epsilon)^2$ versus $\epsilon$}
\label{fig:beta_plot}
\end{center}
\vskip -0.2in
\end{figure}

\begin{proof}
For $\w$ sampled from $p_\w$, \ref{eqn:p1} and \ref{eqn:p2} are true with probability at least $1 - \frac{a^2 + b}{Ka^2}$ and $1-1/K$ respectively, due to Markov's inequality \cite{hdp_book}. \ref{eqn:p3} is true with probability at least $1-\Omega(e^{-cK})$ due to concentration of norm \cite{hdp_book}. Therefore, by union bound, $\w$ satisfies all the three with probability at least $1 - O(1/K)$.
\end{proof}

As a consequence of Lemma 4.1, for $$\w_s = [\wpi_{1:p_1}{}^\top \quad \w_{p_1:p_2}^\top \quad \wpi_{p_2:K}{}^\top]^\top,$$ the following properties also hold with probability at least $1 - O(1/K)$ if $\w,\wpi \sim p_{\w}$:
\begin{align*}
    \norm{J_{p_1:p_2}(\w_s)}_F &\leq \sqrt{K}a, \tag{TP1}\label{eqn:tp1}\\
    \tag{Since $J_{p_1:p_2}(\w_s)$ is a sub-matrix of $J(\w_s)$}\\
    \phi_{p_1,p_2}(\epsilon; \w_s) &\leq \sqrt{K}\beta(\epsilon)\tag{TP2}\label{eqn:tp2}
\end{align*}

If $\w$ satisfies properties \ref{eqn:p1}, \ref{eqn:p2} and \ref{eqn:p3}, then $G(\w)$ will be referred to as an in-distribution image in the range of $G$, since these are the properties of a typical sample from the StyleGAN.

\begin{notation}
{$~$}\\
\vspace{-17pt}

Let $\tilde{B}^{p_1,p_2}_\w(r)$ be the set of all points in $B^{p_1,p_2}_\w(r)$ satisfying properties P1 and P2.
\end{notation}

\begin{lemma}\label{lem:covering_full}
Let $0 < \delta \leq \frac{a\norm{\bm\sigma}_2}{72\sqrt{K}}$. Let $\tilde{\mathcal{N}}_\f(\delta)$ be a discrete set in $G(\tilde{B}^{p_1,p_2}_\w(r))$ such that for every $\f \in G(\tilde{B}^{p_1,p_2}_\w(r))$, there exists an $\f_0 \in \mathcal{N}_\f(\delta)$ such that
\begin{align}
    \norm{\f - \f_0}_2 &\leq \delta + \sqrt{K}\beta(\delta/a).\\
    \intertext{Then,}
    |\tilde{\mathcal{N}}_\f(\delta) | &\leq P\log \left(\frac{145ar\norm{\bm\sigma}}{\sqrt{P}\delta}\right).
\end{align}
\end{lemma}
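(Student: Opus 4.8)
\emph{Proof plan.} The goal is a covering-number bound for the pushforward $G(\tilde B^{p_1,p_2}_\w(r))$, so the plan is to transport a cover of the latent set $B^{p_1,p_2}_\w(r)$ through $G$. A single latent $\epsilon$-net is not enough: for $\w,\w'\in\tilde B^{p_1,p_2}_\w(r)$ with $\norm{\w-\w'}\le\epsilon$, combining $\norm{G(\w)-G(\w')}\le\norm{J(\w')(\w-\w')}+\sqrt{\mathcal{L}(\w,\w')}$ with \ref{eqn:p1}--\ref{eqn:p2} only gives the image-space radius $\sqrt K a\,\epsilon+\sqrt K\beta(\epsilon)$, and the $\sqrt K$ in front of the leading term is fatal when the target radius is $\delta+\sqrt K\beta(\delta/a)$. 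I would therefore use a two-scale construction. Fix the outer scale $\epsilon:=\delta/a$ and let $\mathcal{N}_{\rm out}$ be a maximal $\epsilon$-separated subset of $\tilde B^{p_1,p_2}_\w(r)$; it is an $\epsilon$-net of that set whose centers all lie in it, and by the packing--covering comparison $|\mathcal{N}_{\rm out}|$ is at most the $(\epsilon/2)$-covering number of $B^{p_1,p_2}_\w(r)$, which \autoref{lem:enet_Bk} applied at scale $\epsilon/2=\delta/(2a)$ bounds by $\exp\!\big(P\log[6r+\tfrac{12ra\norm{\bm\sigma}_2}{\sqrt K\,\delta}]\big)$.

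For the inner scale, fix $\w_0\in\mathcal{N}_{\rm out}$ and let $\mathcal{E}_{\w_0}:=J(\w_0)\{\v\in\Wp:\v_{1:p_1}=\v_{p_2:K}=\mathbf{0},\ \norm{\v}\le\epsilon\}$, an ellipsoid with semi-axes $\epsilon\varsigma_1,\dots,\epsilon\varsigma_P$, the $\varsigma_i$ being the singular values of $J_{p_1:p_2}(\w_0)$; since this is a column-submatrix of $J(\w_0)$ and $\w_0\in\tilde B^{p_1,p_2}_\w(r)$, property \ref{eqn:p1} gives $\sum_i\varsigma_i^2\le Ka^2$. I would cover $\mathcal{E}_{\w_0}$ by balls of radius $\rho:=\delta$; by Theorem~2 of \cite{ellipsoid_cover} together with the power-mean bound $\tfrac1P\sum_i\varsigma_i\le a\sqrt{K/P}$ and AM--GM, the number of inner balls is at most $\big(6(1+\tfrac{\epsilon a}{\rho}\sqrt{K/P})\big)^P=\big(6(1+\sqrt{K/P})\big)^P$, using $\epsilon a=\rho$. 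Let $\tilde{\mathcal{N}}_\f(\delta)$ be the union over $\w_0\in\mathcal{N}_{\rm out}$ of the sets $\{G(\w_0)+\v':\v'\text{ an inner-net center of }\mathcal{E}_{\w_0}\}$. Then for $\f=G(\w)$ with $\w\in\tilde B^{p_1,p_2}_\w(r)$, choosing $\w_0$ with $\w-\w_0$ of length $\le\epsilon$ and supported on the free block (so $J(\w_0)(\w-\w_0)\in\mathcal{E}_{\w_0}$), then $\v'$ with $\norm{J(\w_0)(\w-\w_0)-\v'}\le\rho$, and bounding the linearization remainder $\mathcal{L}(\w,\w_0)\le K\beta(\epsilon)^2$ by \ref{eqn:p2}, gives $\norm{\f-(G(\w_0)+\v')}_2\le\sqrt K\beta(\delta/a)+\delta$. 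To have the centers sit in $G(\tilde B^{p_1,p_2}_\w(r))$ exactly, one may replace each $G(\w_0)+\v'$ by $G$ of a nearby element of the set, adding one more $\sqrt K\beta$-term, absorbed via the polynomial growth of $\beta$ hypothesized in \autoref{lem:srec}.

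It remains to combine the two counts: $\log|\tilde{\mathcal{N}}_\f(\delta)|\le P\log[6r+\tfrac{12ra\norm{\bm\sigma}_2}{\sqrt K\,\delta}]+P\log[6(1+\sqrt{K/P})]$. The hypothesis $\delta\le a\norm{\bm\sigma}_2/(72\sqrt K)$ forces $a\norm{\bm\sigma}_2/(\sqrt K\delta)\ge72$, hence $6r\le\tfrac{ra\norm{\bm\sigma}_2}{12\sqrt K\delta}$, so the first bracket is at most $\tfrac{145}{12}\cdot\tfrac{ra\norm{\bm\sigma}_2}{\sqrt K\,\delta}$; since $K\ge P$ the second bracket is at most $12\sqrt{K/P}$. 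Multiplying, the $\sqrt K$'s cancel and $\log|\tilde{\mathcal{N}}_\f(\delta)|\le P\log\!\big(\tfrac{145\,ar\norm{\bm\sigma}_2}{\sqrt P\,\delta}\big)$, which is the claim.

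The only genuinely nonroutine step is the reduction of the image-space radius from the naive $\sqrt K\delta+\sqrt K\beta(\delta/a)$ to $\delta+\sqrt K\beta(\delta/a)$: it works only because $J_{p_1:p_2}(\w_0)$ is controlled in Frobenius norm, so its image is a \emph{thin} ellipsoid, coverable by $e^{O(P)}$ balls of radius $\delta$ even though its longest semi-axis can be as large as $\sqrt Ka\,\epsilon$ --- which is where \ref{eqn:p1} and the ellipsoid-covering lemma are essential and why the final count carries $\sqrt P$, not $\sqrt K$, in the denominator. Everything else is bookkeeping: taking $\mathcal{N}_{\rm out}$ inside $\tilde B^{p_1,p_2}_\w(r)$ so that \ref{eqn:p1}, \ref{eqn:p2} are available at the expansion point $\w_0$ (costing a factor $2$ via packing--covering), and choosing $\epsilon a=\rho$ so the inner-net constant is a pure number.
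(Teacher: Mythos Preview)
Your proposal is correct and follows essentially the same route as the paper: set $\epsilon=\delta/a$, take an $\epsilon$-net of $\tilde B^{p_1,p_2}_\w(r)$ (bounded via \autoref{lem:enet_Bk} at scale $\epsilon/2$ after a packing/covering comparison), then at each center $\w_0$ cover the ellipsoid $J_{p_1:p_2}(\w_0)\cdot\{\|\cdot\|\le\epsilon\}$ by $\delta$-balls using the ellipsoid-covering bound together with $\|J_{p_1:p_2}(\w_0)\|_F\le\sqrt K a$ from \ref{eqn:p1}, control the linearization remainder by \ref{eqn:p2}, and combine the two counts with the hypothesis on $\delta$ to cancel the $\sqrt K$ factors and obtain the constant $145/\sqrt P$. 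Your write-up is in fact slightly more careful than the paper's in two places---you explicitly justify why the outer net can be taken with centers inside $\tilde B^{p_1,p_2}_\w(r)$ (so that \ref{eqn:p1}, \ref{eqn:p2} apply at $\w_0$), and you flag the relocation of $G(\w_0)+\v'$ back into $G(\tilde B^{p_1,p_2}_\w(r))$, which the paper's proof glosses over.
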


\begin{proof}
The outline of this proof is as follows. First, an $\delta/a$-covering over $\tilde{B}^{p_1,p_2}_\w(r)$ will be constructed. Then, each of the spherical balls covering $\tilde{B}^{p_1,p_2}_\w(r)$ is transformed into approximately an ellipsoid depending upon the Jacobian of $G$ at the center of the spherical ball. Then, each of these ellipsoids will be approximately covered by a $\delta$-net. The collection of all such approximate $\delta$-nets covering the individual ellipsoids will give an approximate $\delta$-net over $G(\tilde{B}^{p_1,p_2}_\w(r))$, which is the result required.

Let $\epsilon = \delta/a$. Let $\tilde{\mathcal{N}}_\w(\epsilon)$ be an optimal $\epsilon$-covering of $\tilde{B}^{p_1,p_2}_\w(r)$. Also, let $\mathcal{N}_\w(\epsilon)$ denote an optimal $\epsilon$-covering of $B^{p_1,p_2}_\w(r)$. Therefore, since $\tilde{B}^{p_1,p_2}_\w(r) \subseteq B^{p_1,p_2}_\w(r)$, \cite{hdp_book},
\begin{align*}
    \log |\tilde{\mathcal{N}}_\w(\epsilon)| &\leq \log|\mathcal{N}_\w(\epsilon/2)|\\
    &\leq P\log\left[ \left(\frac{12r}{\epsilon}\right)\left( \frac{\norm{\bm\sigma}}{\sqrt{K}} + \frac{\epsilon}{2} \right) \right]. \tag{using \autoref{lem:enet_Bk}.}
\end{align*}

Now, consider a point $\w_0 \in \tilde{\mathcal{N}}_{\w}(\epsilon)$. Therefore, 
%by Property \ref{eqn:tp2}, 
for every $\w' \in \tilde{B}_{\w}^{p_1,p_2}(r)$ such that $\norm{\w' - \w_0} \leq \epsilon$, we have 
\begin{align}
    \norm{G(\w') - G(\w_0)} \leq \norm{J(\w_0)(\w'-\w_0)} + \sqrt{K}\beta(\epsilon).
\end{align}

Therefore, upto an error of $\sqrt{K}\beta(\epsilon)$, $G(\w') - G(\w_0)$ lies in an ellipsoid $E(\w_0)$ with principal radii $\varsigma_1\epsilon, \varsigma_2\epsilon, \dots, \varsigma_P\epsilon$, where $\varsigma_1 \geq \varsigma_2 \geq \dots \geq \varsigma_P$ are the singular values of $J_{p_1:p_2}(\w_0)$.

Let $\mathcal{N}_\delta(\w_0)$ be a $\delta$-covering of $E(\w_0)$. For a moment assume that there exists an integer $p$ such that $\varsigma_p < a \leq \varsigma_{p+1}$. Therefore, from Theorem 2 in \cite{ellipsoid_cover}, we have
\begin{align*}
    \log|\mathcal{N}_\delta(\w_0)| &\leq \sum_{i=1}^p \log\left(\frac{\varsigma_i}{a}\right) + P\log 6,\\
    &= \log\left[\frac{1}{a^P}\prod_{i=1}^p\varsigma_i\prod_{i=p+1}^P a\right] + P\log 6\\
    &\leq P\log\left[\frac{1}{a}\left(\frac{\norm{\bm{\varsigma}}_2}{\sqrt{P}} + a\right)\right] + P\log 6,\\
    &\leq P\log\left[\frac{1}{a}\left(\frac{\norm{\bm{\varsigma}}_2}{\sqrt{P}} + a\sqrt{\frac{K}{P}}\right)\right] + P\log 6,\\
    &\leq P\log 12\sqrt{\frac{K}{P}}.\tag{using Property \ref{eqn:tp1}.}
\end{align*}
Note that this bound holds even when $a \geq \varsigma_1$, or $a \leq \varsigma_P$.

Therefore, for every $\w'$ such that $\norm{\w' - \w_0} \leq \epsilon$, there exists a point $\f_1$ in $\mathcal{N}_\delta(\w_0)$ such that
\begin{align*}
    &\norm{G(\w_0) + J(\w_0)(\w'-\w_0) - \f_1} \leq \delta,\\
    \Rightarrow&\left\|G(\w_0) + J(\w_0)(\w'-\w_0) - G(\w')~+  G(\w') - \f_1\right\|
    \leq \delta,\\
    \Rightarrow &\norm{G(\w') - \f_1} \leq \delta + \sqrt{K}\beta(\delta/a)\numberthis{}\\
    \tag{by triangle inequality}
\end{align*}

This holds for all $\w_0 \in \tilde{\mathcal{N}}^K_\w(r)$. Therefore, a suitable candidate set for $\tilde{\mathcal{N}}_\f(\delta)$ is 
\begin{align}
    \tilde{\mathcal{N}}_\f(\delta) = \{\w_1 ~ s.t. ~ \w_1 \in \mathcal{N}_\delta(\w_0), ~ \w_0 \in \tilde{\mathcal{N}}^K_\w(r) \}
\end{align}

Therefore, 
\begin{align*}
    \log |\tilde{\mathcal{N}}_\f(\delta)| &\leq P\log 12\sqrt{\frac{K}{P}} + P\log\left[ \frac{12ar}{\delta}\left( \frac{\norm{\bm\sigma}}{\sqrt{K}} + \frac{\delta}{2a} \right) \right],\\
    &\leq P\log\left[ \left(\frac{144ar\sqrt{K}}{\delta\sqrt{P}}\right)\left( \frac{\norm{\bm\sigma}}{\sqrt{K}} + \frac{\delta}{2a} \right) \right],\\
    &\leq P\log \left(\frac{145ar\norm{\bm\sigma}}{\sqrt{P}\delta}\right).
\end{align*}
\end{proof}

\textbf{Proof of Theorem 4.1}:

First, we note that the result of \autoref{lem:covering_full} applies to a decreasing sequence of $\delta$'s: $\delta_i = \delta_0 / 2^i$. Also, from \autoref{fig:beta_plot}, we note that $\beta(\epsilon)$ goes polynomially with $\epsilon$ with $\beta(0) = 0$. Due to these,
Lemma 8.2 in \cite{bora} can be reformulated as following, with the proof proceeding similarly as \cite{bora}.
\begin{lemma}\label{lem:small_scale_error}
Let $H \in \mathbb{R}^{m\times n}$ be a matrix with iid Gaussian random elements having mean 0 and variance $1/m$. Let $0 < \delta \leq \frac{a\norm{\bm\sigma}_2}{72\sqrt{K}}$. For all $\delta'<\delta$, let $\beta(\delta'/a)$ go polynomially as $\delta'/a$, with $\beta(0) = 0$. If
\begin{align}
    m = {\mathrm{\Omega}}\left(P \log\frac{ar\norm{\bm\sigma}}{\delta}\right), 
\end{align}
then for any $\f \in G(\tilde{B}^{p_1,p_2}_\w(r))$, if $\f' = \argmin_{\hat{x}\in \tilde{\mathcal{N}}_\f(\delta)} \norm{\f - \hat{\f}}$, $\norm{H(\f - \f')} \leq O(\delta + \sqrt{K}\beta(\delta/a))$ with probability $1 - e^{-{\rm\Omega}(m)}$.\\
\end{lemma}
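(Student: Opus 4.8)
The plan is to reproduce the dyadic‑chaining argument behind Lemma~8.2 of Bora \textit{et al.}~\cite{bora}, feeding in the covering‑number estimate of \autoref{lem:covering_full} in place of their Lipschitz‑based estimate and checking that the extra additive slack $\sqrt{K}\beta(\cdot/a)$ carried by that covering is harmless. Put $\delta_i:=\delta/2^{i}$ for $i\ge 0$. Each $\delta_i\le\delta\le a\norm{\bm\sigma}_2/(72\sqrt K)$, so \autoref{lem:covering_full} supplies, for every $i$, a set $\tilde{\mathcal{N}}_\f(\delta_i)\subseteq G(\tilde{B}^{p_1,p_2}_\w(r))$ with $\log|\tilde{\mathcal{N}}_\f(\delta_i)|\le P\log\big(145\,ar\norm{\bm\sigma}/(\sqrt{P}\,\delta_i)\big)$, such that every $\f\in G(\tilde{B}^{p_1,p_2}_\w(r))$ admits a point $\f^{(i)}\in\tilde{\mathcal{N}}_\f(\delta_i)$ with $\norm{\f-\f^{(i)}}\le\Delta_i:=\delta_i+\sqrt{K}\,\beta(\delta_i/a)$; in particular $\f'=\f^{(0)}$ and $\Delta_0=\delta+\sqrt{K}\,\beta(\delta/a)$. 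Since $\beta$ is polynomial with $\beta(0)=0$, there is a constant $\rho>0$ (fixed by the growth exponent of $\beta$) with $\Delta_i\le\Delta_0\,2^{-\rho i}$; hence $\Delta_i\to 0$, so $\f^{(i)}\to\f$, the telescoping identity $\f-\f'=\sum_{i\ge 0}(\f^{(i+1)}-\f^{(i)})$ holds with $\norm{\f^{(i+1)}-\f^{(i)}}\le 2\Delta_i$, and $\sum_{i\ge 0}\Delta_i=O(\Delta_0)$.

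Next I would fix a truncation level $i^\star=\Theta(\log(n/m))$ and work on the intersection of two events for the random matrix $H$: (i) for every $i\le i^\star$ and every pair $(p,q)\in\tilde{\mathcal{N}}_\f(\delta_{i+1})\times\tilde{\mathcal{N}}_\f(\delta_i)$ one has $\norm{H(p-q)}\le 2\norm{p-q}$; and (ii) $\norm{H}_{\rm op}\le 2+\sqrt{n/m}$. Event (i) follows from $\chi^2$ concentration of $\norm{H\x}$ for a fixed $\x$ (failure probability $e^{-\Omega(m)}$ per pair) together with a union bound over $\sum_{i\le i^\star}|\tilde{\mathcal{N}}_\f(\delta_{i+1})|^2\le e^{O(P\log(ar\norm{\bm\sigma}/\delta)+Pi^\star)}$ pairs, and event (ii) from the largest‑singular‑value bound for Gaussian matrices. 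Because $m=\Omega(P\log(ar\norm{\bm\sigma}/\delta))$ and $i^\star$ is only logarithmic (so $Pi^\star$ is absorbed by the constant in $\Omega(\cdot)$), both events hold simultaneously with probability $1-e^{-\Omega(m)}$; crucially this event does not depend on $\f$, so the conclusion then holds for all $\f$ at once. On this event, for an arbitrary $\f$ split $\f-\f'=\sum_{i=0}^{i^\star-1}(\f^{(i+1)}-\f^{(i)})+(\f-\f^{(i^\star)})$: the head is bounded by $\sum_{i<i^\star}2\norm{\f^{(i+1)}-\f^{(i)}}\le 4\sum_{i\ge 0}\Delta_i=O(\Delta_0)$ via (i), and the tail by $\norm{H}_{\rm op}\norm{\f-\f^{(i^\star)}}\le(2+\sqrt{n/m})\,\Delta_0\,2^{-\rho i^\star}=O(\Delta_0)$ via (ii) and the choice of $i^\star$, whence $\norm{H(\f-\f')}=O(\delta+\sqrt{K}\,\beta(\delta/a))$.

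The step I expect to be the main obstacle is precisely this tail: the covering numbers $|\tilde{\mathcal{N}}_\f(\delta_i)|$ grow geometrically with $i$, so one cannot afford a uniform union bound of the multiplicative norm‑preservation property over all scales at once, and the residual $\f-\f^{(i^\star)}$ must instead be controlled by the much cruder operator‑norm bound --- which works only because $\beta$ decays polynomially with $\beta(0)=0$, so that a merely logarithmic truncation $i^\star$ already makes $\Delta_{i^\star}$ beat $\norm{H}_{\rm op}$. Apart from this truncation bookkeeping (and checking that the hypothesis ``$\beta(\delta'/a)\to 0$ polynomially'' is exactly what lets the additive errors sum geometrically to $O(\sqrt{K}\beta(\delta/a))$), the argument is a transcription of Bora \textit{et al.}'s proof, with $P$ playing the role of their latent dimension and $145\,ar\norm{\bm\sigma}/(\sqrt{P}\,\delta)$ that of their $\bar{L}r/\delta$.
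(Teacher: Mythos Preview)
Your proposal is correct and is essentially the same approach the paper takes: the paper's own ``proof'' consists of the remark that the result of \autoref{lem:covering_full} applies along the dyadic sequence $\delta_i=\delta/2^i$, that $\beta$ vanishes polynomially, and that with these two ingredients ``the proof goes similarly as the proof of Lemma~8.2 in \cite{bora}.'' You have simply written out that transcription in detail---the chaining over dyadic nets, the union bound over pairs at finitely many scales, the operator-norm tail handled via a logarithmic truncation $i^\star$, and the geometric summation of the $\Delta_i$ enabled by the polynomial decay of $\beta$---so there is nothing to contrast.
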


The proof goes similarly as the proof of Lemma 8.2 in \cite{bora}. Furthermore, similar to Lemma 4.1 in \cite{bora}, \autoref{lem:covering_full} and \autoref{lem:small_scale_error} give rise to the set-restricted eigenvalue condition of $H$ on $G(\tilde{B}^{p_1,p_2}_\w(r))$ as follows:

\begin{lemma}[Set-restricted eigenvalue condition]\label{lem:app_srec}
Let $\tau < 1$. Let $H$ be an matrix with iid Gaussian-distributed elements with mean 0 and variance $1/m$. Let $0 < \delta \leq \frac{a\norm{\bm\sigma}_2}{72\sqrt{K}}$. For all $\delta'<\delta$, let $\beta(\delta'/a)$ go polynomially as $\delta'/a$, with $\beta(0) = 0$. If
\begin{align}
    m = {\rm\Omega}\left(\frac{P}{\tau^2}\log \frac{ar\norm{\bm\sigma}}{\delta}\right),
\end{align}
then $H$ satisfies the $\text{S-REC}\big{(}G(\tilde{B}^{p_1,p_2}_\w(r)), ~ 1-\tau, ~ \delta+\sqrt{K}\beta(\delta/a)\big{)}$ with probability $1 - e^{-{\rm\Omega}(\tau^2m)}$.
\end{lemma}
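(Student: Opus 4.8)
The plan is to mimic the argument used by Bora \emph{et al.} for deriving an S-REC from the range of a generative model, adapting it to the prior-image-constrained set $G(\tilde{B}^{p_1,p_2}_\w(r))$ and to the \emph{approximate} (rather than exact) linearity of $G$. First I would invoke \autoref{lem:covering_full} to obtain a finite set $\tilde{\mathcal{N}}_\f(\delta) \subseteq G(\tilde{B}^{p_1,p_2}_\w(r))$ with $\log|\tilde{\mathcal{N}}_\f(\delta)| \leq P\log(145ar\norm{\bm\sigma}/(\sqrt{P}\delta))$ such that every $\f \in G(\tilde{B}^{p_1,p_2}_\w(r))$ lies within $\delta + \sqrt{K}\beta(\delta/a)$ of some net point. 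The restriction of the latent vector to the $P$-dimensional subspace $\mathcal{W}^+_{p_1,p_2}$ (imposed by fixing the prior-image styles) is exactly what makes this covering number scale with $P = p_2 - p_1$ rather than $K$; this is the only structural difference from the unconstrained case.

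Next I would run a standard Johnson--Lindenstrauss concentration step on the net. For a fixed vector $\v$, a Gaussian matrix $H$ with entries $\mathcal{N}(0,1/m)$ satisfies $\Pr[\,|\,\norm{H\v} - \norm{\v}\,| > \tau\norm{\v}\,] \leq 2e^{-c\tau^2 m}$ for an absolute constant $c$. Taking a union bound over the at most $|\tilde{\mathcal{N}}_\f(\delta)|^2$ differences of pairs of net points, and using the hypothesis $m = \Omega(P\tau^{-2}\log(ar\norm{\bm\sigma}/\delta))$ to dominate $2\log|\tilde{\mathcal{N}}_\f(\delta)|$, I get that with probability $1 - e^{-\Omega(\tau^2 m)}$ one has $(1-\tau)\norm{\f_1' - \f_2'} \leq \norm{H(\f_1' - \f_2')}$ for all $\f_1', \f_2' \in \tilde{\mathcal{N}}_\f(\delta)$. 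In parallel, \autoref{lem:small_scale_error} (whose hypotheses, including the polynomial growth of $\beta$ and the bound on $\delta$, are exactly those assumed here) guarantees that, on an event of probability $1 - e^{-\Omega(m)}$, every $\f \in G(\tilde{B}^{p_1,p_2}_\w(r))$ has $\norm{H(\f - \f')} \leq O(\delta + \sqrt{K}\beta(\delta/a))$ for its nearest net point $\f'$. Since $\tau < 1$, the two failure events together have probability $e^{-\Omega(\tau^2 m)}$.

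Then I would assemble the bound. Given any $\f_1, \f_2 \in G(\tilde{B}^{p_1,p_2}_\w(r))$ with nearest net points $\f_1', \f_2'$, two applications of the triangle inequality give $\norm{H(\f_1-\f_2)} \geq \norm{H(\f_1'-\f_2')} - \norm{H(\f_1-\f_1')} - \norm{H(\f_2-\f_2')} \geq (1-\tau)\norm{\f_1'-\f_2'} - O(\delta+\sqrt{K}\beta(\delta/a))$, and $\norm{\f_1'-\f_2'} \geq \norm{\f_1-\f_2} - \norm{\f_1-\f_1'} - \norm{\f_2-\f_2'} \geq \norm{\f_1-\f_2} - 2(\delta+\sqrt{K}\beta(\delta/a))$, so altogether $\norm{H(\f_1-\f_2)} \geq (1-\tau)\norm{\f_1-\f_2} - O(\delta+\sqrt{K}\beta(\delta/a))$. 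Finally, to turn the slack from a constant multiple of $\delta+\sqrt{K}\beta(\delta/a)$ into exactly $\delta+\sqrt{K}\beta(\delta/a)$, I would re-run the whole construction with $\delta$ replaced by $\delta/C$ for a suitable absolute constant $C$; the polynomial-growth assumption on $\beta$ ensures $\sqrt{K}\beta((\delta/C)/a) = O(\sqrt{K}\beta(\delta/a))$, and replacing $\delta$ by $\delta/C$ changes $m$ and the exponents only by constants, preserving the stated scaling.

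The step I expect to require the most care is this last one: keeping the bookkeeping of the additive error honest across the covering-number bound, the small-scale lemma, and the final rescaling, so that the slack collapses to precisely $\delta + \sqrt{K}\beta(\delta/a)$ while the constants hidden in $m = \Omega(\cdot)$ and in $e^{-\Omega(\tau^2 m)}$ stay consistent. This is where the two non-routine ingredients — the $P$-dependence of the covering number and the polynomial control on $\beta$ furnished by approximate local linearity — must be used together; the Gaussian concentration and triangle-inequality steps are otherwise standard.
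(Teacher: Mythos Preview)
Your proposal is correct and follows essentially the same route as the paper: the paper's own proof is a one-line appeal to Lemma~4.1 of Bora \emph{et al.}, instantiated with \autoref{lem:covering_full} and \autoref{lem:small_scale_error}, which is precisely the covering-plus-JL-plus-triangle-inequality argument you have written out in detail. Your final rescaling step (replacing $\delta$ by $\delta/C$ and using the polynomial growth of $\beta$) is the right way to reduce the $O(\cdot)$ slack to the stated one; the paper leaves this implicit in the citation.
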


\autoref{lem:app_srec}, Lemma 4.3 in \cite{bora} and Lemma 4.1 imply
Theorem 4.1 which is restated here for convenience.\\

\newpage
\textbf{Theorem 4.1.} 

Let $H \in \mathbb{R}^{m\times n}$ satisfy $\textrm{S-REC}(G(\tilde{B}^{p_1,p_2}_\w(r)), \gamma, \delta+\sqrt{K}\beta(\delta/a))$. Let $\vec{n} \in \mathbb{R}^m$. Let $\w, \wpi \sim p_\w$. Let $\fpi = G(\wpi)$ be the known prior image. Let 
$$\tilde{\w} = [\wpi_{1:p_1}{}^\top \quad \w_{p_1:p_2}^\top \quad \wpi_{p_2:K}{}^\top]^\top.$$
Let $\tilde{\f} = G(\tilde{\w})$ represent the object to-be-imaged. Let $\g = H\tilde{\f} + \vec{n}$ be the imaging measurements. Let
\begin{align}
    \hat{\f} = \argmin_{\f\in G(\tilde{B}^K_\w(r))} \norm{\g - H\f}_2^2.
\end{align}
Then, 
\begin{align}
    \| \hat{\f} - \tilde{\f} \| \leq \frac{1}{\gamma} (2\norm{\vec{n}} + \delta + \sqrt{K}\beta(\delta/a))
\end{align}
with probability $1 - O(1/K)$.

\section{Additional Figures}
\subsection{Inverse crime study: $n/m = 5$}
\begin{figure*}[ht]
\vskip 0.2in
\begin{center}
\centerline{\includegraphics[width=\linewidth]{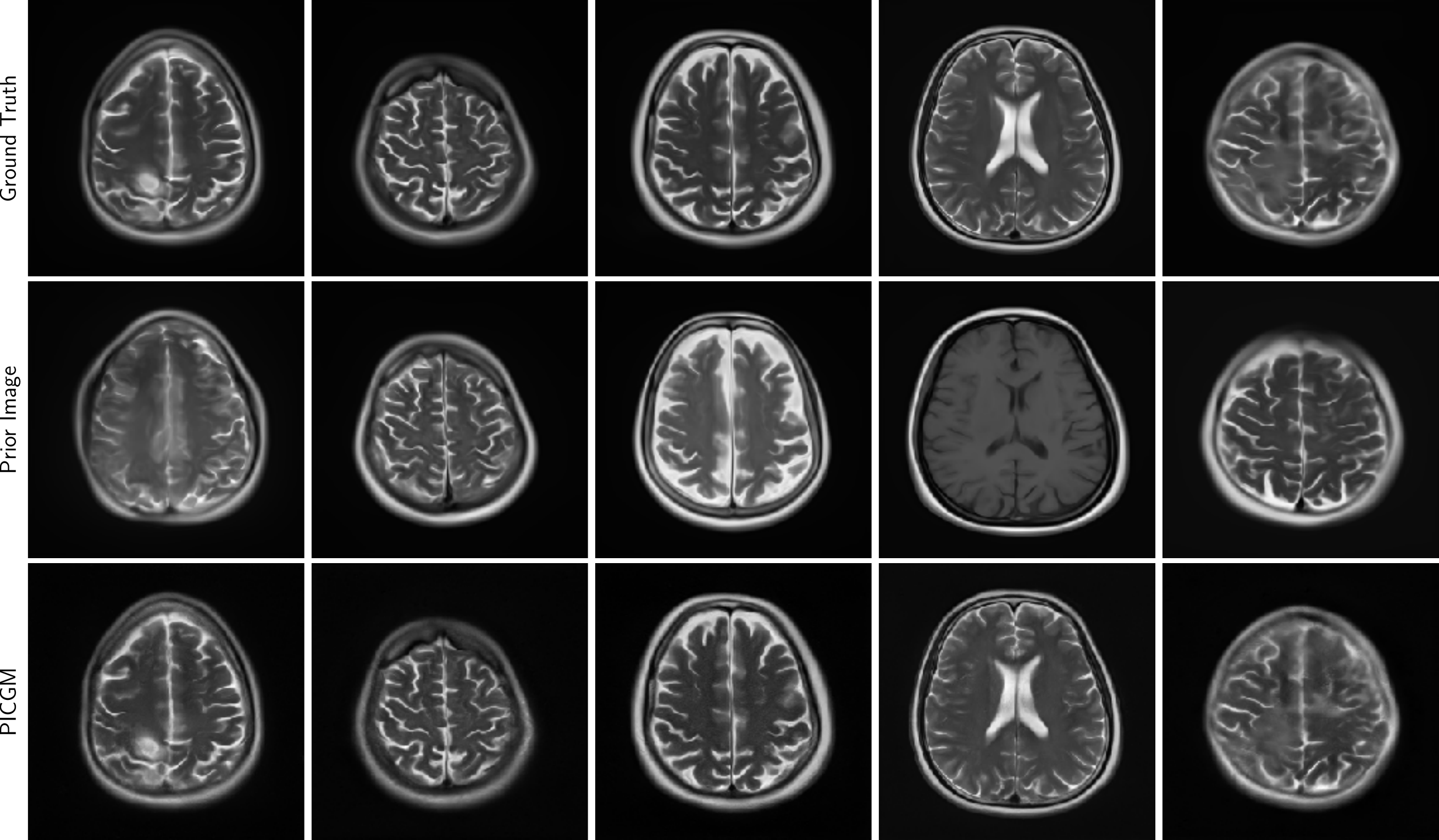}}
\caption{Ground truth, prior image and image estimated from Gaussian measurements with $n/m = 5$ using the proposed approach in the inverse crime case.}
\label{fig:inversecrime_images_gaussian_0.2}
\end{center}
\vskip -0.2in
\end{figure*}

\newpage
\subsection{Inverse crime study: $n/m = 10$}
\begin{figure*}[ht]
\vskip 0.2in
\begin{center}
\centerline{\includegraphics[width=\linewidth]{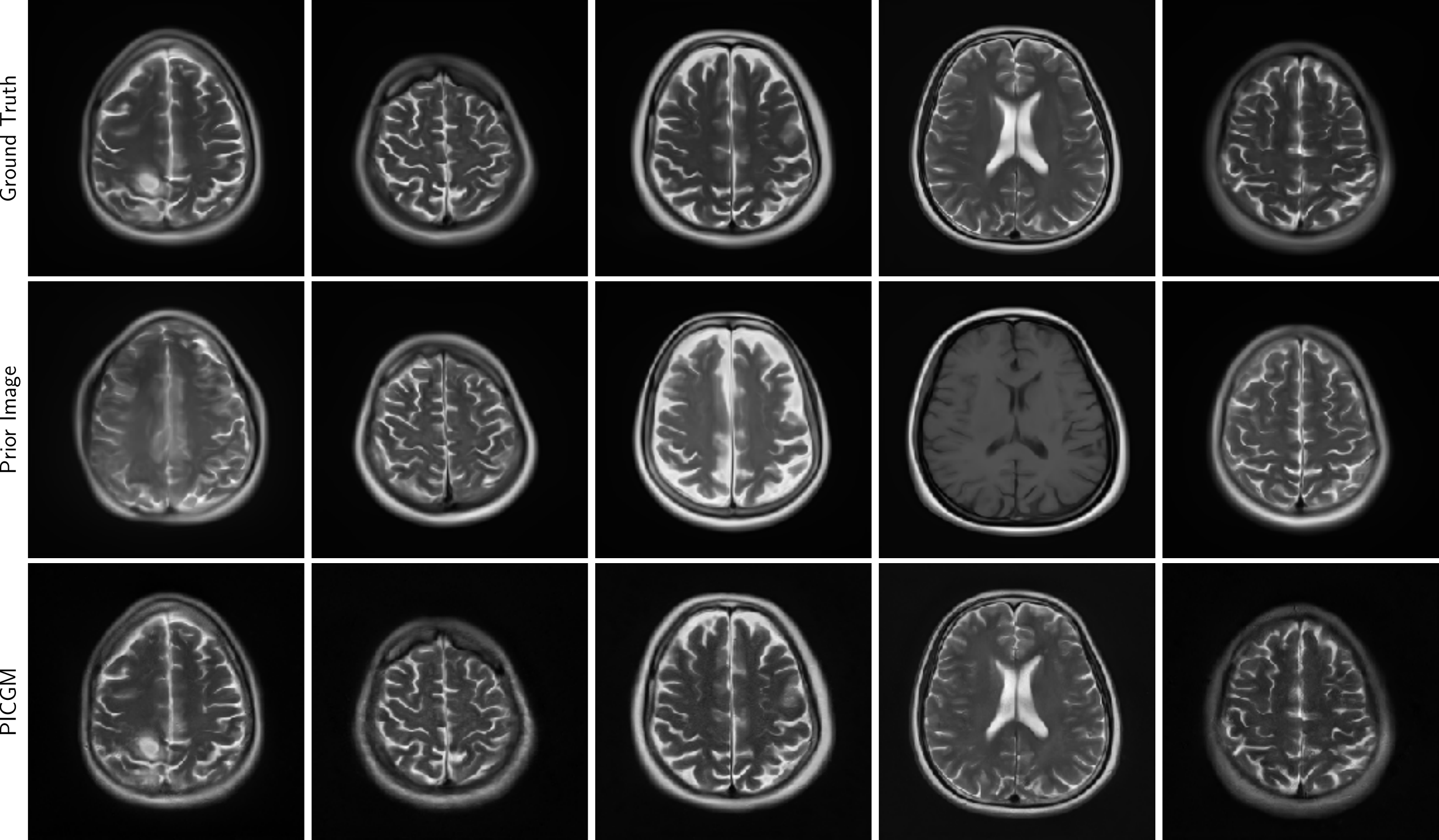}}
\caption{Ground truth, prior image and image estimated from Gaussian measurements with $n/m = 10$ using the proposed approach in the inverse crime case.}
\label{fig:inversecrime_images_gaussian_0.1}
\end{center}
\vskip -0.2in
\end{figure*}

\newpage
\subsection{Inverse crime study: $n/m = 20$}
\begin{figure*}[ht]
\vskip 0.2in
\begin{center}
\centerline{\includegraphics[width=\linewidth]{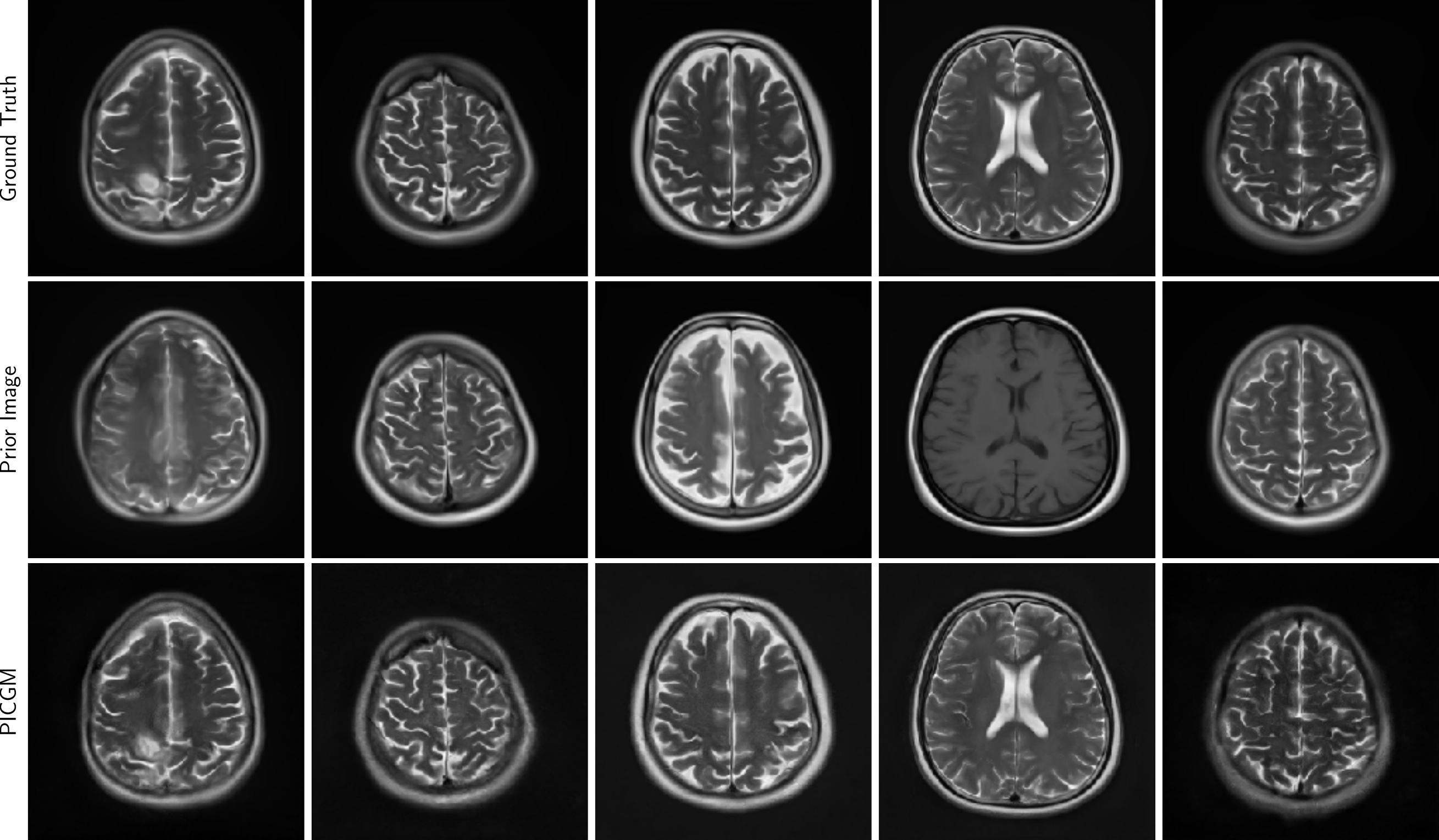}}
\caption{Ground truth, prior image and image estimated from Gaussian measurements with $n/m = 20$ using the proposed approach in the inverse crime case.}
\label{fig:inversecrime_images_gaussian_0.05}
\end{center}
\vskip -0.2in
\end{figure*}

\newpage
\subsection{Inverse crime study: $n/m = 50$}
\begin{figure*}[ht]
\vskip 0.2in
\begin{center}
\centerline{\includegraphics[width=\linewidth]{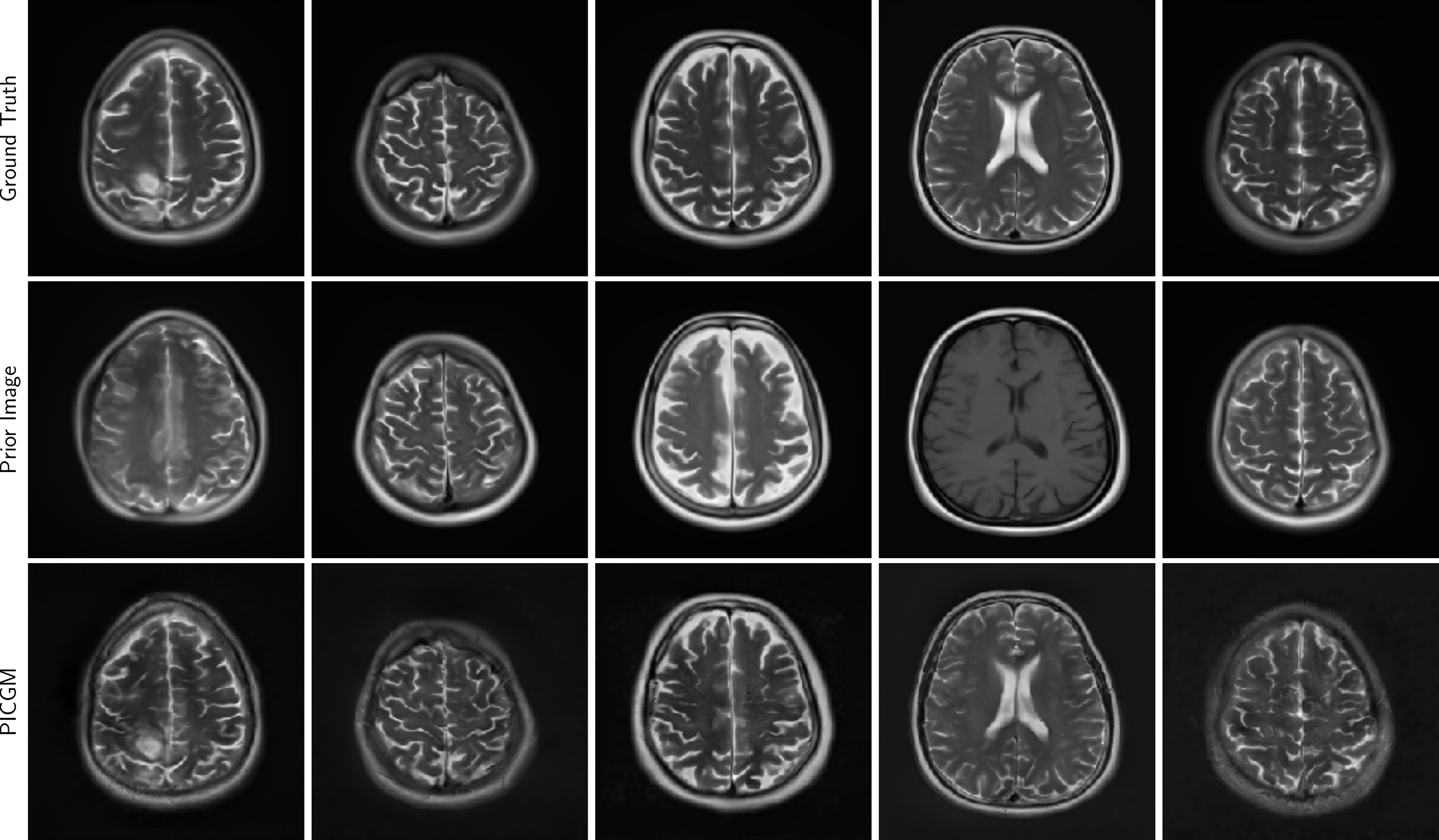}}
\caption{Ground truth, prior image and image estimated from Gaussian measurements with $n/m = 50$ using the proposed approach in the inverse crime case.}
\label{fig:inversecrime_images_gaussian_0.02}
\end{center}
\vskip -0.2in
\end{figure*}

\newpage
\subsection{Face image study: $n/m = 50$}
\begin{figure*}[ht]
\vskip 0.2in
\begin{center}
\centerline{\includegraphics[width=0.7\linewidth]{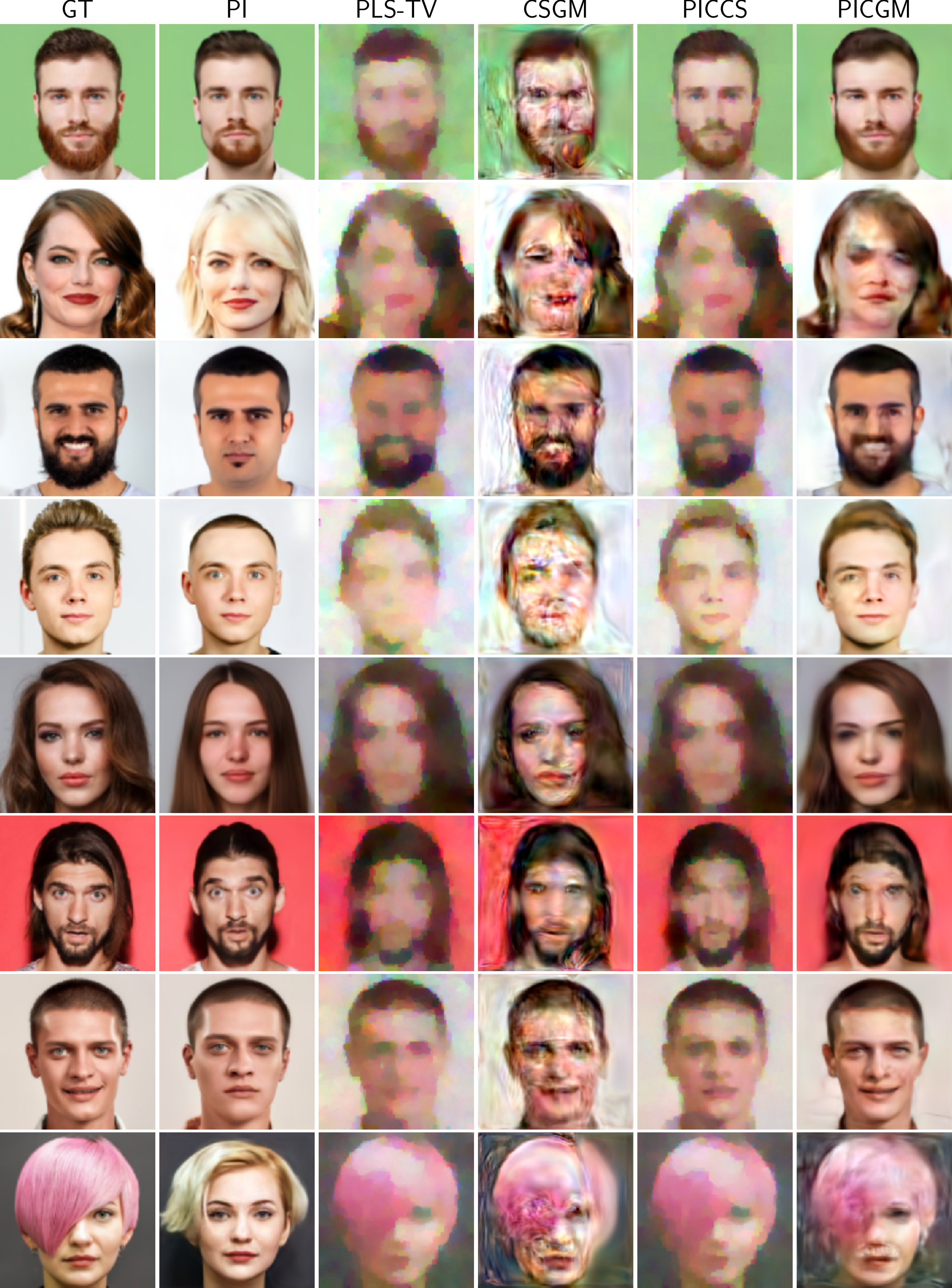}}
\caption{Ground truth, prior image and image estimated from Gaussian measurements with $n/m = 50$ using the proposed approach for the face image study.}
\label{fig:images_gaussian_0.02_many}
\end{center}
\vskip -0.2in
\end{figure*}

\newpage
\subsection{MR image study: $n/m = 2$}
\vskip 0.2in
\begin{center}
\centerline{\includegraphics[width=\linewidth]{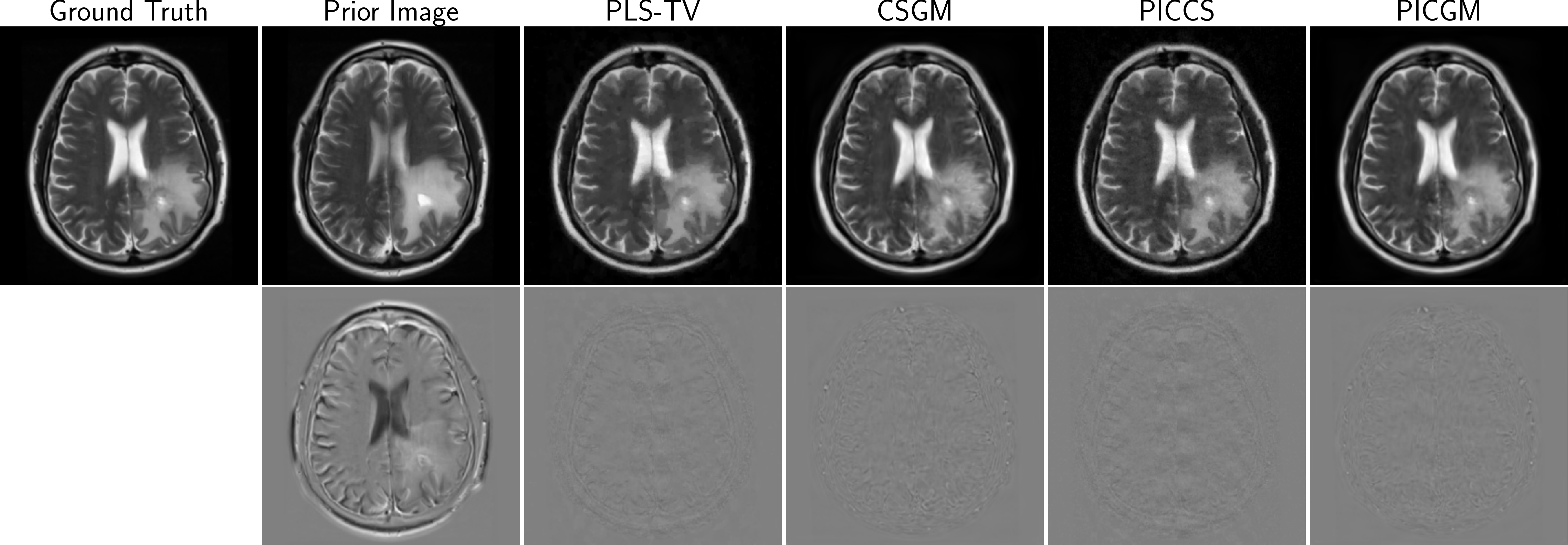}}
\vspace{10pt}
\centerline{\includegraphics[width=\linewidth]{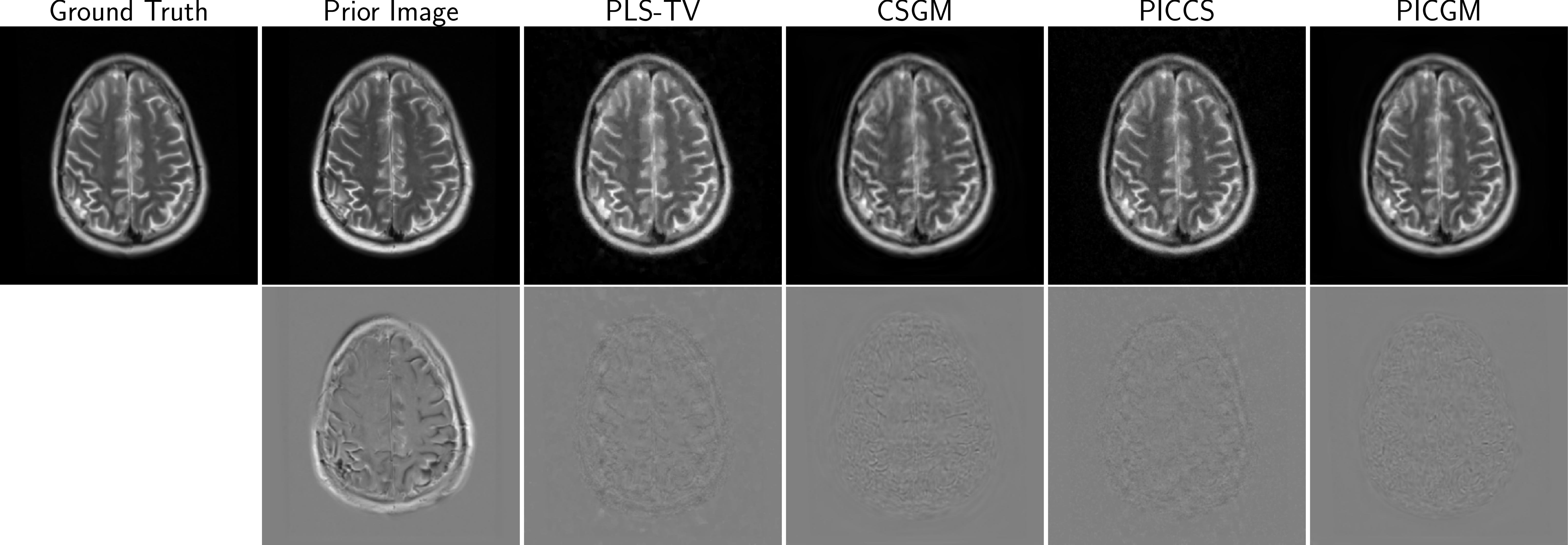}}
\vspace{10pt}
\centerline{\includegraphics[width=\linewidth]{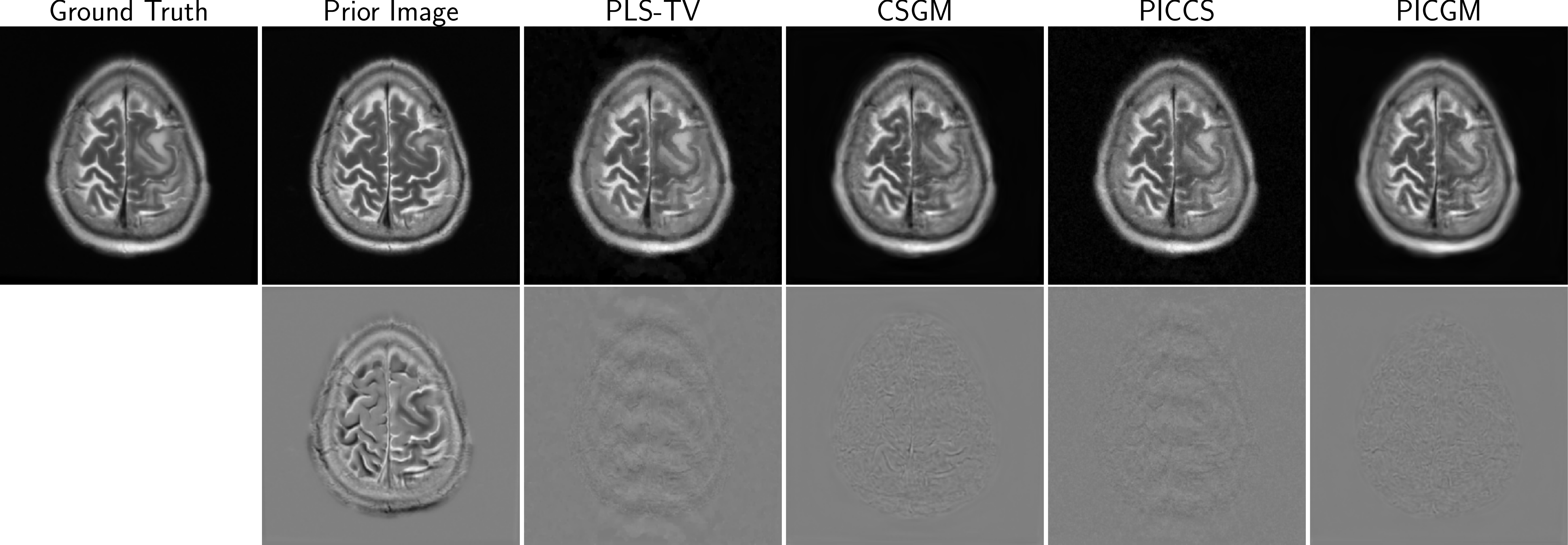}}
\captionof{figure}{Ground truth, prior image, and images reconstructed from simulated MRI measurements with $n/m = 2$ along with difference images for the MR image study}
\label{fig:recon_mask_rand_2x}
\end{center}
\vskip -0.2in

\newpage
\subsection{MR image study: $n/m = 4$}
\vskip 0.2in
\begin{center}
\centerline{\includegraphics[width=\linewidth]{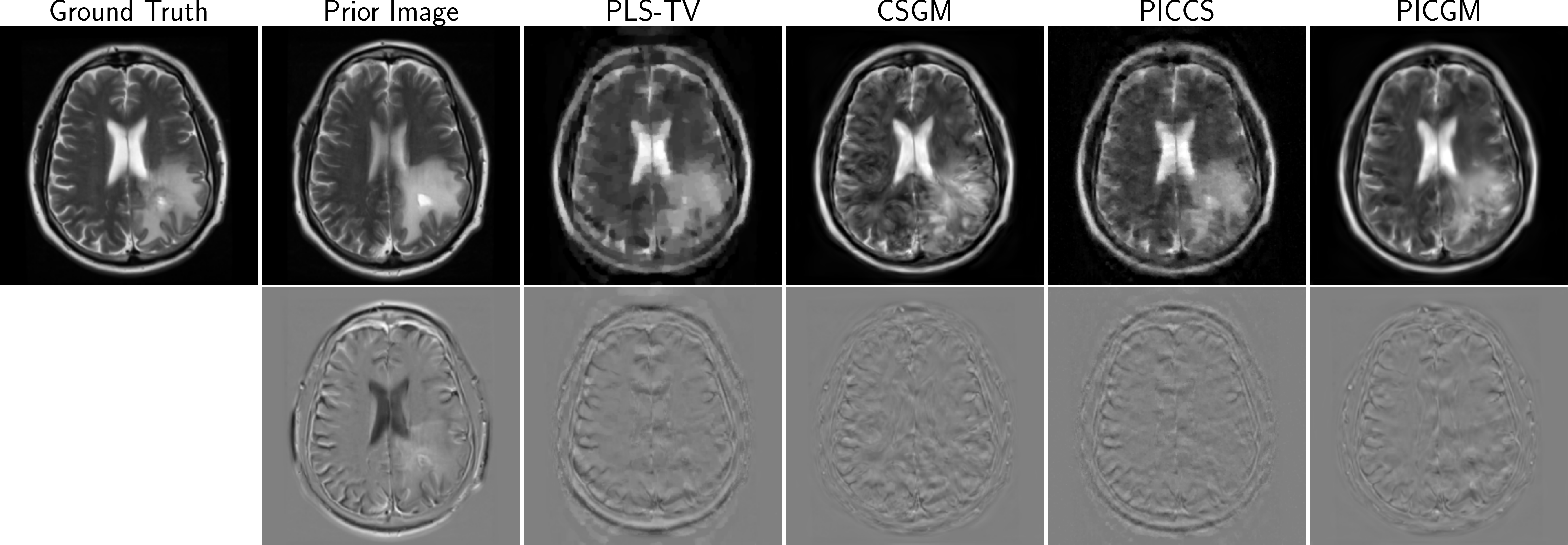}}
\vspace{10pt}
\centerline{\includegraphics[width=\linewidth]{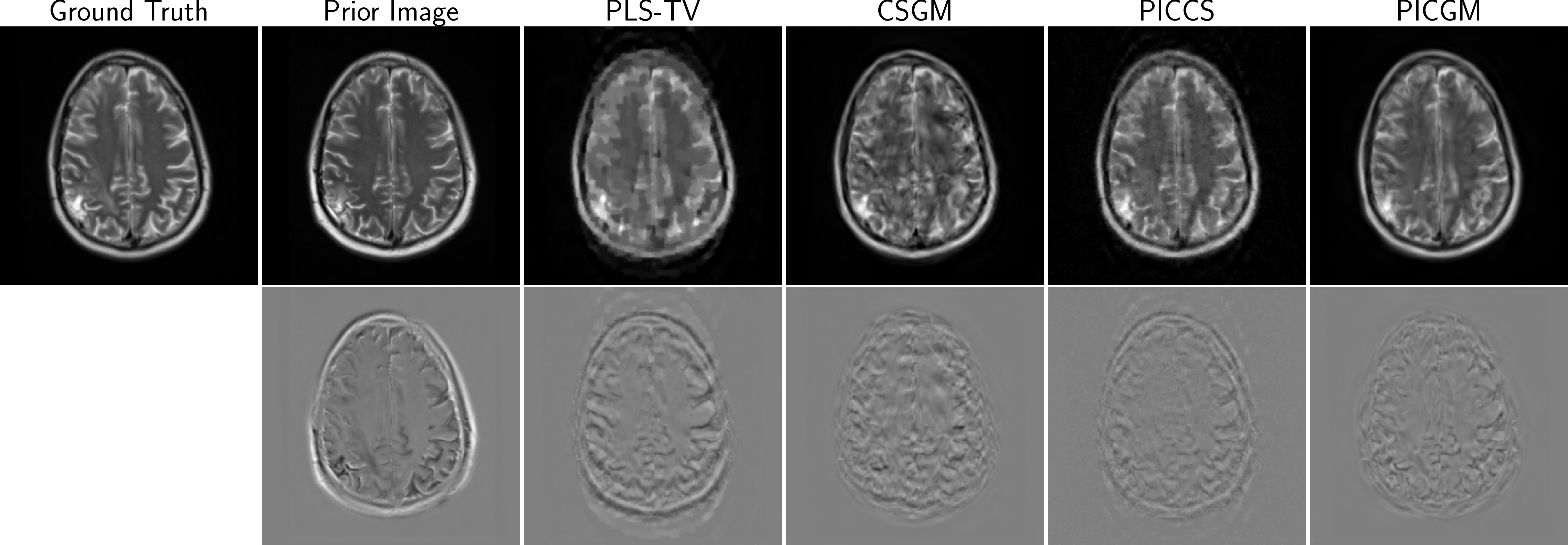}}
\vspace{10pt}
\centerline{\includegraphics[width=\linewidth]{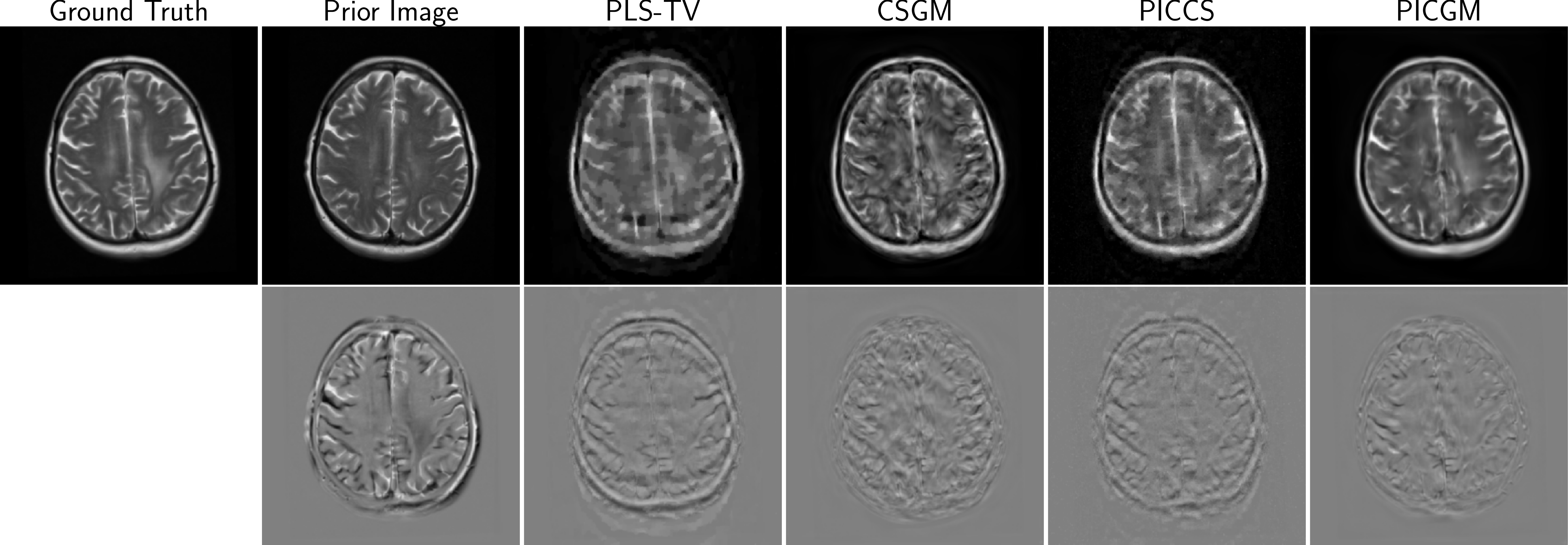}}
\captionof{figure}{Ground truth, prior image, and images reconstructed from simulated MRI measurements with $n/m = 4$ along with difference images for the MR image study}
\label{fig:recon_mask_rand_4x}
\end{center}
\vskip -0.2in

\newpage
\subsection{MR image study: $n/m = 6$}
\vskip 0.2in
\begin{center}
\centerline{\includegraphics[width=\linewidth]{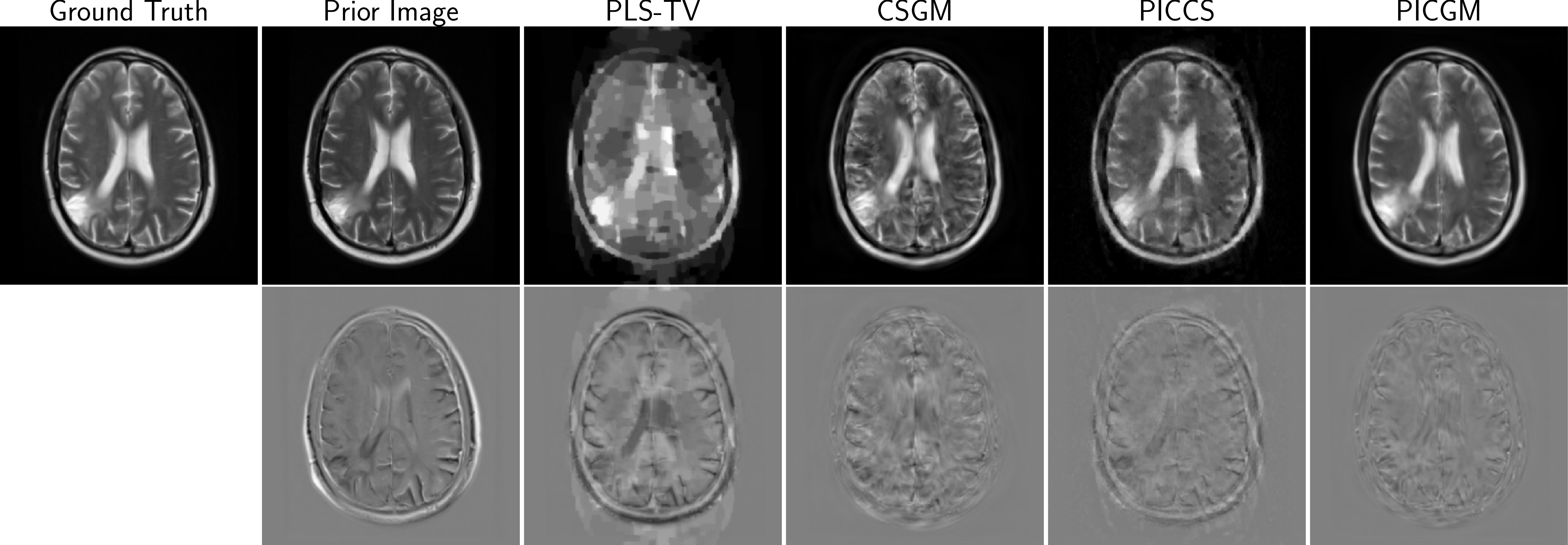}}
\vspace{10pt}
\centerline{\includegraphics[width=\linewidth]{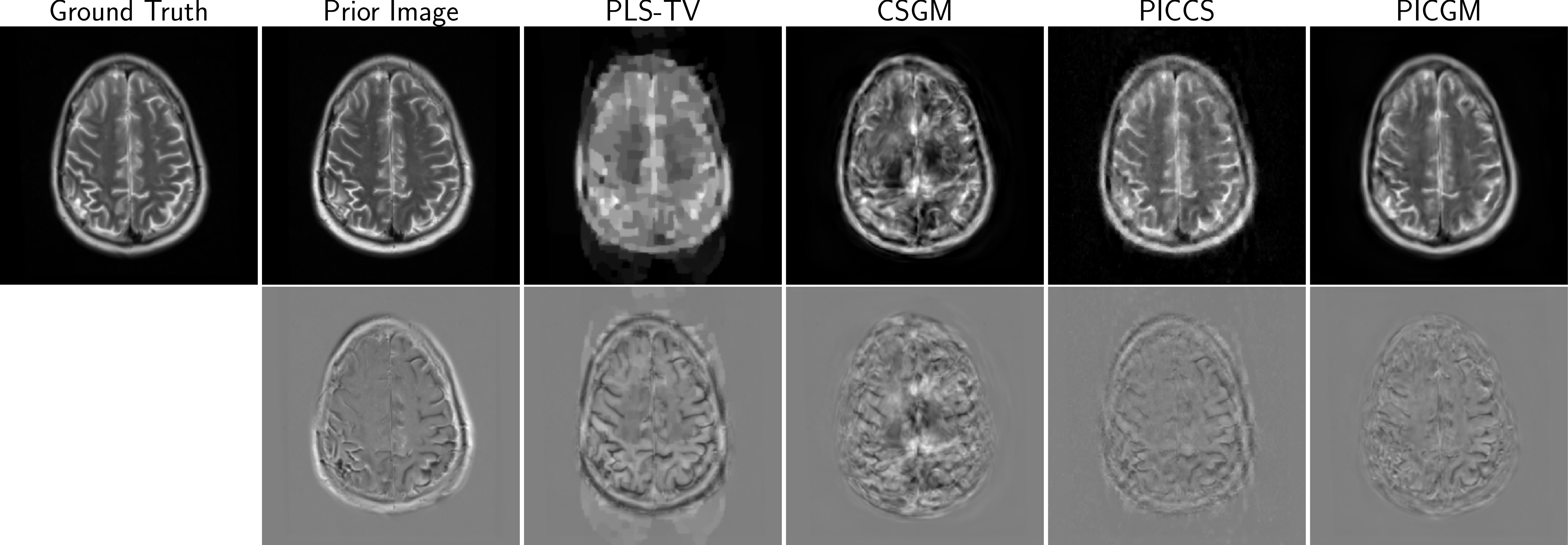}}
\vspace{10pt}
\centerline{\includegraphics[width=\linewidth]{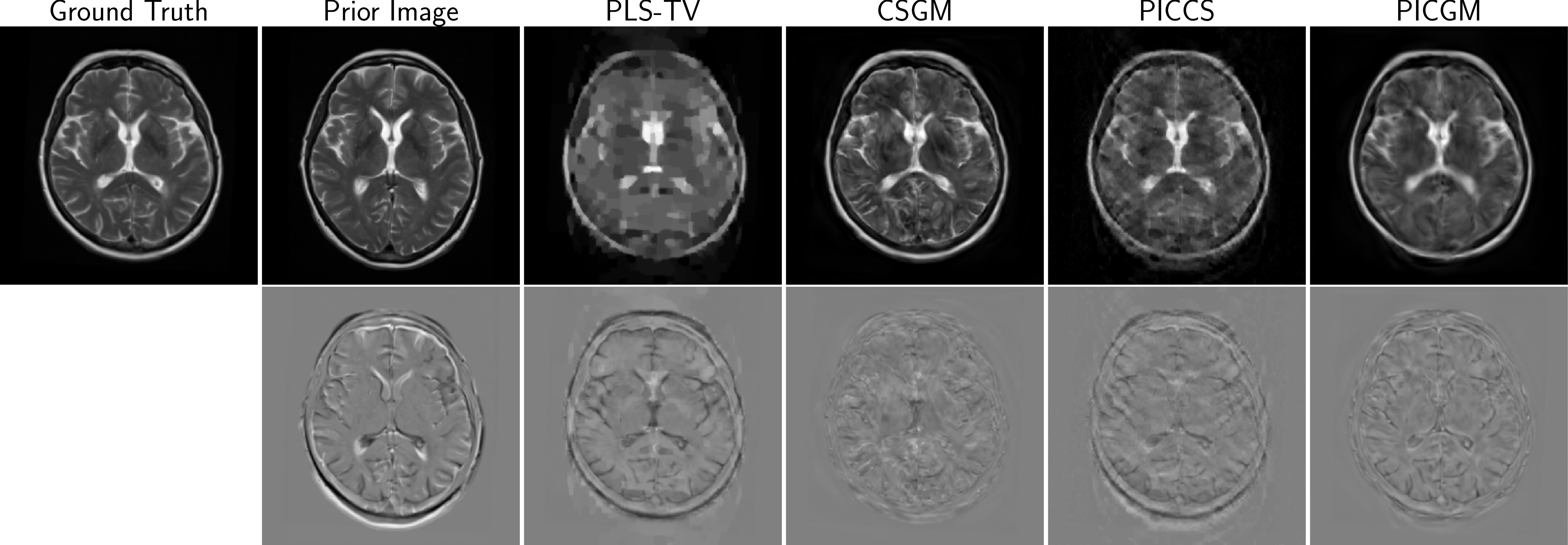}}
\captionof{figure}{Ground truth, prior image, and images reconstructed from simulated MRI measurements with $n/m = 6$ along with difference images for the MR image study}
\label{fig:recon_mask_rand_6x}
\end{center}
\vskip -0.2in

\newpage
\subsection{MR image study: $n/m = 8$}
\vskip 0.2in
\begin{center}
\centerline{\includegraphics[width=\linewidth]{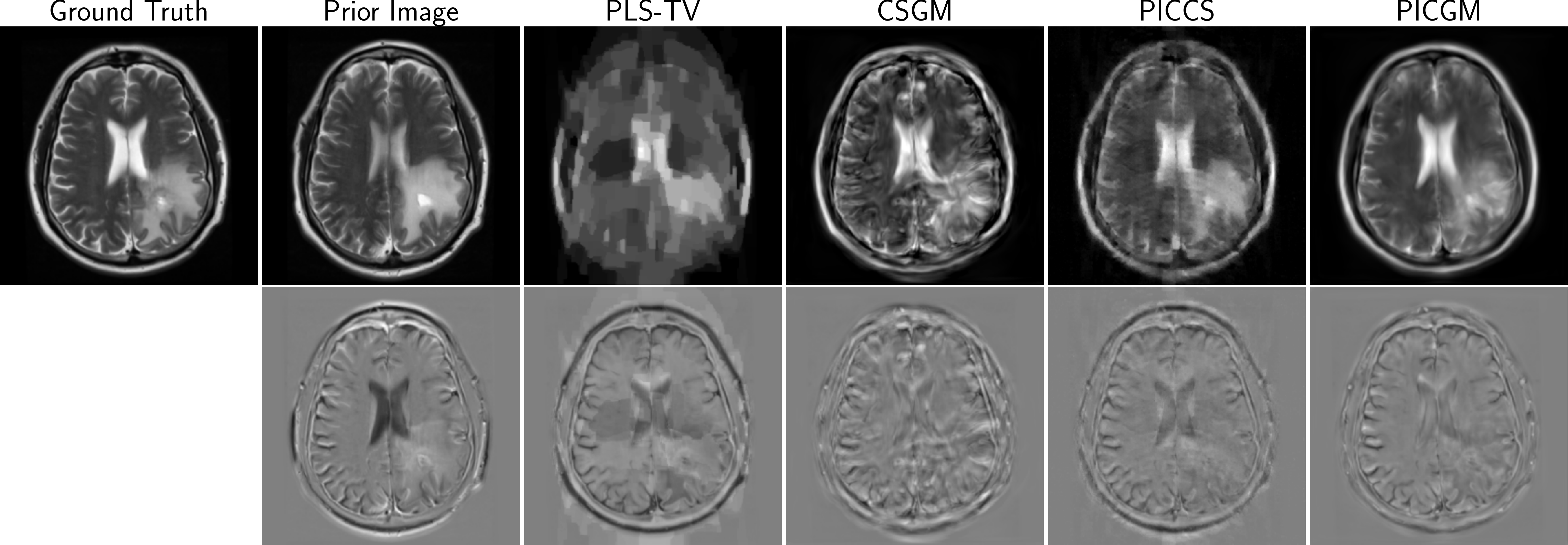}}
\vspace{10pt}
\centerline{\includegraphics[width=\linewidth]{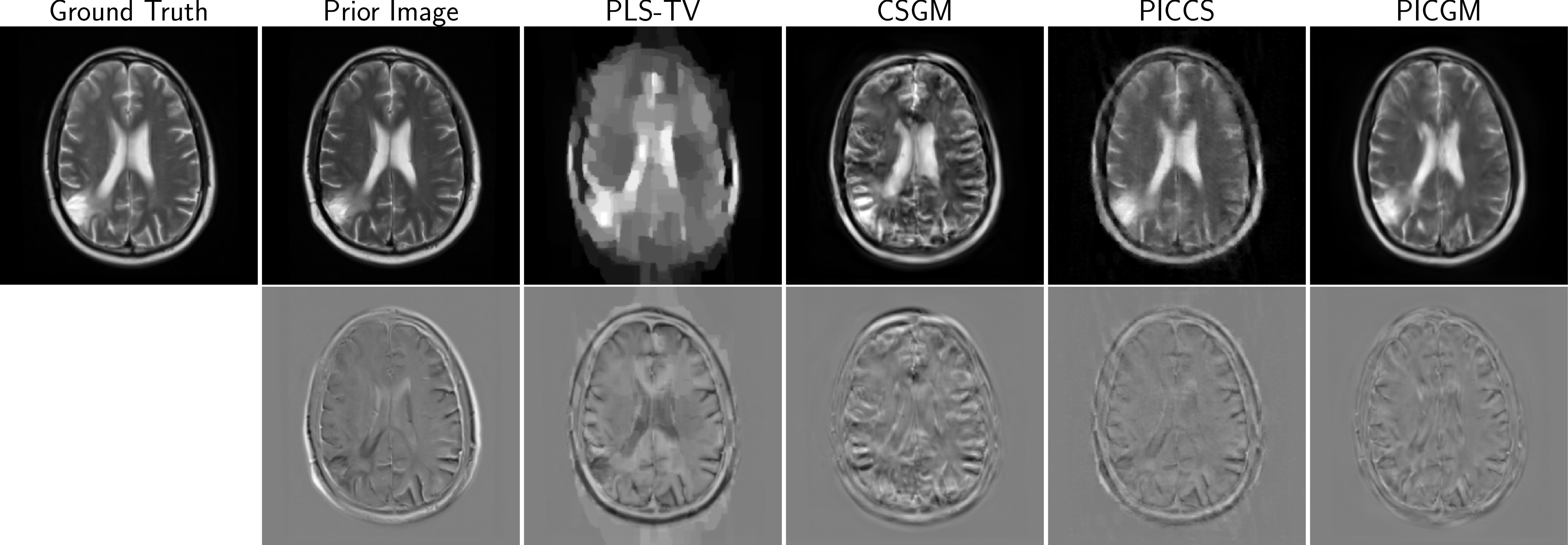}}
\vspace{10pt}
\centerline{\includegraphics[width=\linewidth]{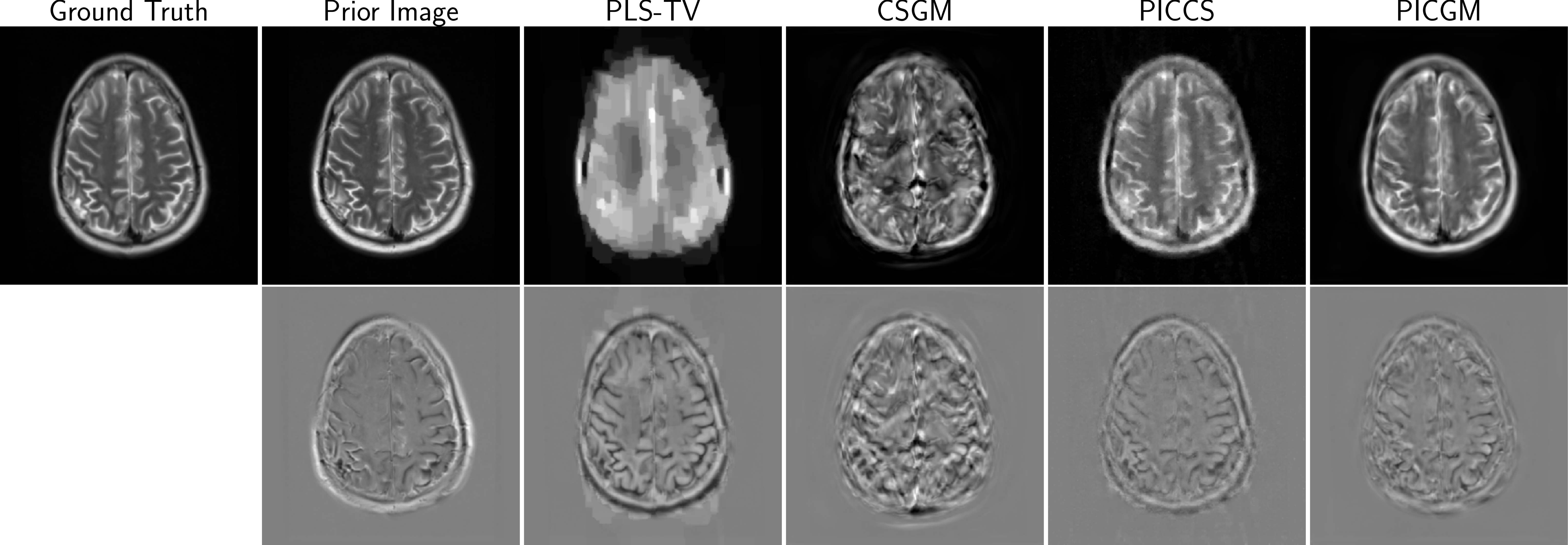}}
\captionof{figure}{Ground truth, prior image, and images reconstructed from simulated MRI measurements with $n/m = 8$ along with difference images for the MR image study}
\label{fig:recon_mask_rand_8x}
\end{center}
\vskip -0.2in

\newpage
\subsection{MR image study: $n/m = 12$}
\vskip 0.2in
\begin{center}
\centerline{\includegraphics[width=\linewidth]{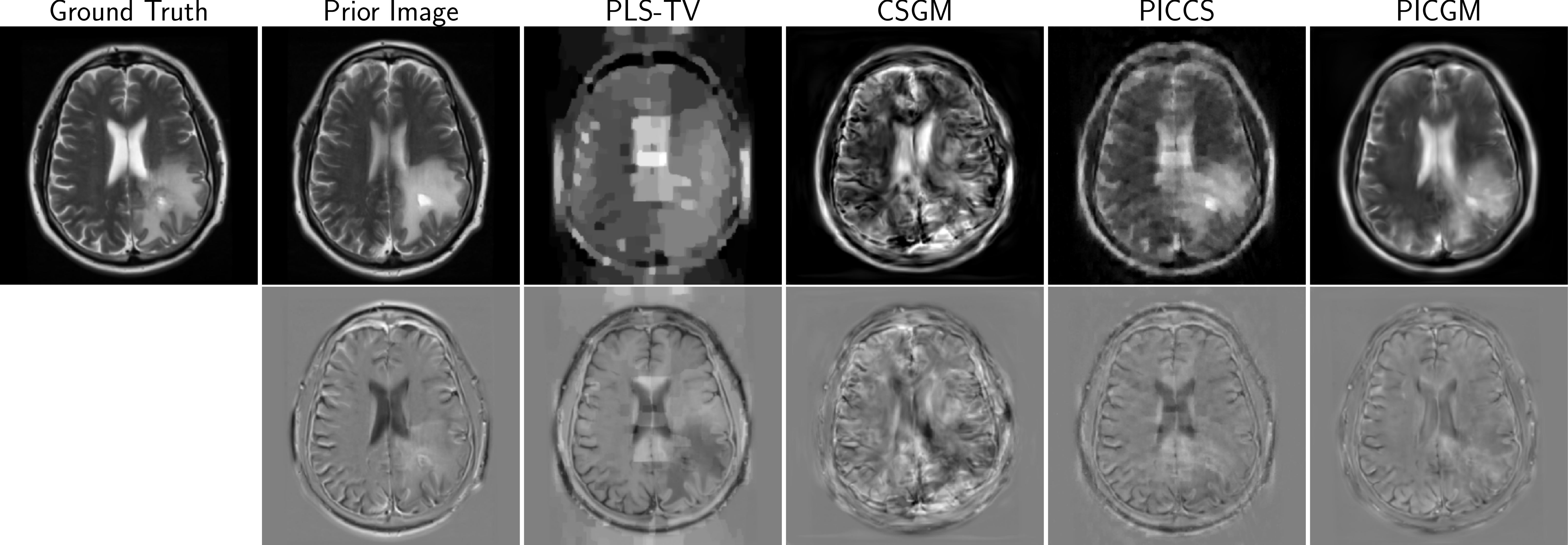}}
\vspace{10pt}
\centerline{\includegraphics[width=\linewidth]{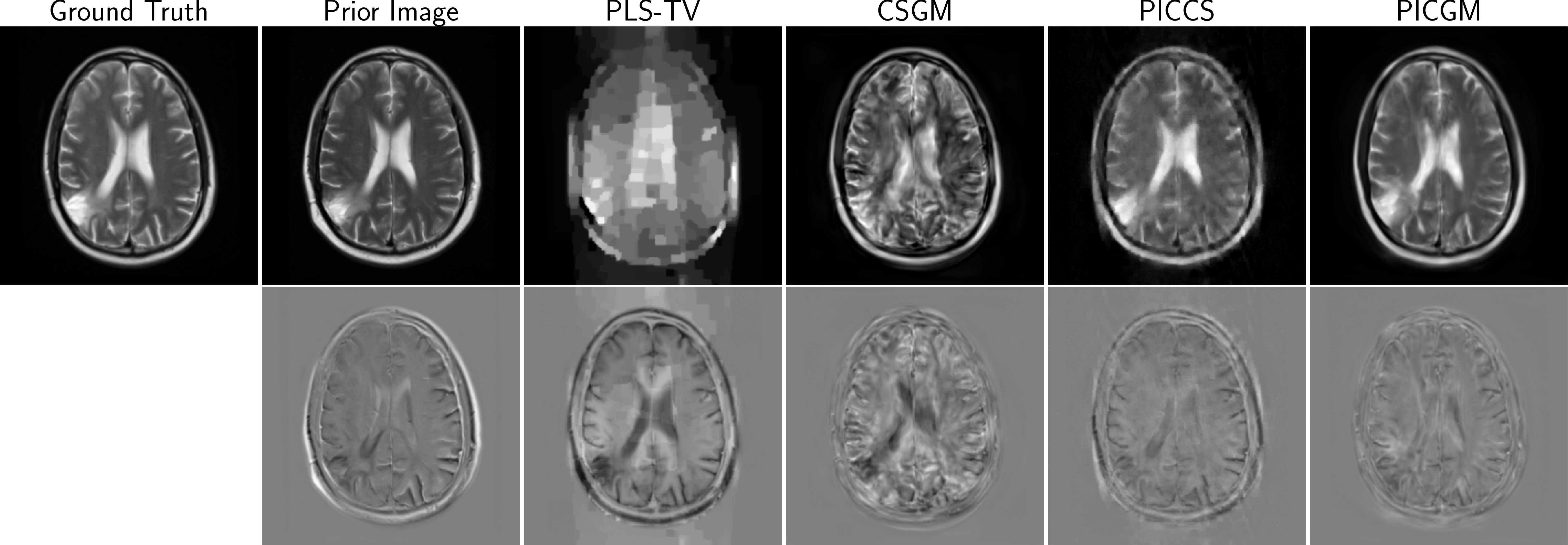}}
\vspace{10pt}
\centerline{\includegraphics[width=\linewidth]{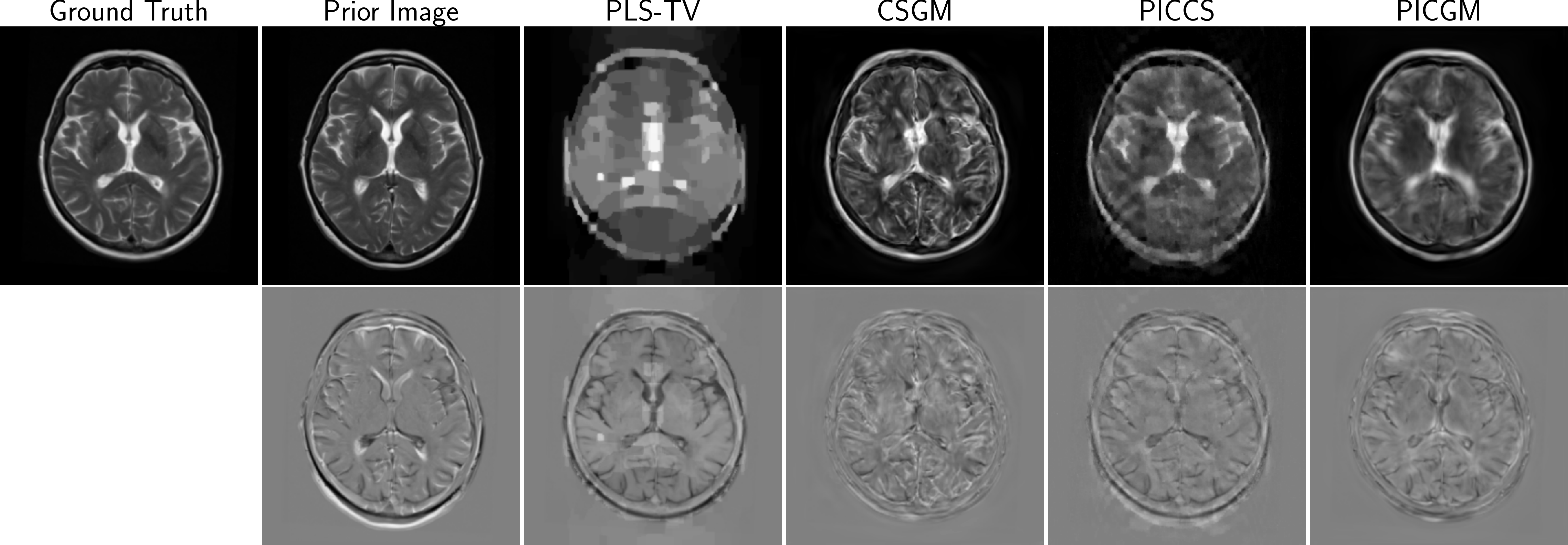}}
\captionof{figure}{Ground truth, prior image, and images reconstructed from simulated MRI measurements with $n/m = 12$ along with difference images for the MR image study}
\label{fig:recon_mask_rand_12x}
\end{center}
\vskip -0.2in

%
%%%%%%%%%%%%%%%%%%%%%%%%%%%%%%%%%%%%%%%%%%%%%%%%%%%%%%%%%%%%%%%%%%%%%%%%%%%%%%%
%%%%%%%%%%%%%%%%%%%%%%%%%%%%%%%%%%%%%%%%%%%%%%%%%%%%%%%%%%%%%%%%%%%%%%%%%%%%%%%

\end{document}